\crefname{claim}{Claim}{Claims}
\Crefname{claim}{Claim}{Claims}
\tikzstyle{block} = [draw, rectangle split, rectangle split parts=2, minimum
\tikzstyle{virtual} = [coordinate]
\tikzstyle{treenode} = [circle, align=center, inner sep=0pt, text centered,
\tikzstyle{leaf} = [rectangle, align=center, text centered,
\tikzstyle{s-treenode} = [treenode, draw=red, ultra thick]
\newcounter{hours}
\newcounter{minutes}
\newcommand{\Printtime}{\setcounter{hours}{\time/60}%
  \setcounter{minutes}{\time-\value{hours}*60}%
  \thehours:%
  \ifthenelse{\value{minutes}<10}{0}{}\theminutes}
\newlist{myclaim}{enumerate}{3}
\setlist[myclaim,1]{label=\arabic*.,
  ref =\thelemma.\arabic*}
\crefname{myclaimi}{Claim}{Claims}
\newcommand{\concat}{\mathbin\Vert}
\newcommand{\ShortLong}[2]{#2}
\title{Quotable Signatures for \\Authenticating Shared Quotes\thanks{The first and third authors were supported in part by the Independent Research Fund Denmark, Natural Sciences, grant DFF-0135-00018B.
    All authors are currently associated with DDC -- the Digital Democracy Center at the University of Southern Denmark.}
}
\renewcommand{\orcidID}[1]{}
\author{Joan Boyar\inst{1}\orcidID{0000-0002-0725-8341} \and
  Simon Erfurth\inst{1}\orcidID{0000-0001-8862-2856} \and
  Kim S. Larsen\inst{1}\orcidID{0000-0003-0560-3794} \and
  Ruben Niederhagen\inst{1,2}}
\institute{University of Southern Denmark, Odense, Denmark\\
  \email{joan@imada.sdu.dk}\\
  \email{simon@serfurth.dk}\\
  \email{kslarsen@imada.sdu.dk} \and
        Academia Sinica, Taipei, Taiwan\\
        \email{ruben@polycephaly.org}}
\date{\today{}}
\begin{document}
% \theoremstyle{acmdefinition}
% \newtheorem{remark}[theorem]{Remark}
% \begin{singlespacing}

%   \thispagestyle{empty}
\maketitle
\begin{abstract}
  Quotable signature schemes are digital signature schemes with the additional
  property that from the signature for a message, any party can extract signatures
  for (allowable) quotes from the message, without knowing the secret key or
  interacting with the signer of the original message.
  Crucially, the extracted signatures are still signed with the original secret
  key.
  We define a notion of security for quotable signature schemes and
  construct a concrete example of a quotable signature scheme, using Merkle
  trees and classical digital signature schemes.
  The scheme is shown to be secure, with respect to the aforementioned notion of
  security.
  Additionally, we prove bounds on the complexity of the constructed
  scheme\ShortLong{.}{ and provide algorithms for signing, quoting, and verifying.}
  Finally, concrete use cases of quotable signatures are considered, using them to
  combat misinformation by bolstering authentic content on social media.
  We consider both how quotable signatures can be used, and why using them
  could help mitigate the effects of fake news.

  \keywords{quotable signatures \and
    digital signatures \and
    Merkle trees \and
    authenticity \and
    fake news}
  % Fake news is a growing threat to democracy and society. A variety of different
  % approaches to combating this threat have been tried, but all have at best only
  % been partially successful. In this paper, we propose an alternative/supplemental approach,
  % where we use a special variant of digital signatures, called quotable
  % signatures, to identify and authenticate the source and integrity of quotes.
  % Doing so, will allow readers to judge the trustworthiness of quotes from news
  % on an informed basis. The aim here is to bolster authentic content, and thus
  % mitigate the impact of fake news. We present a theoretical background for an
  % implementation of quotable signatures using Merkle trees, and we present a
  % prototype, using this implementation, that allows people to include signatures
  % for the quotes they share, and to verify the quotes they read, if they have
  % been shared with a signature. The presented system, and quotable signatures in
  % general, are not only relevant to mitigating the effects of fake news, but are
  % also interesting in their own right.
\end{abstract}

%%% Local Variables:
%%% mode: latex
%%% TeX-master: "../main"
%%% End:

% \end{singlespacing}
% \begin{spacing}{1.15}

%   \section*{Personal notes}
%   \begin{itemize}
%   \item Source Authenticator is used both as the name of the browser
%     extension AND as the party providing the service of storing digital
%     signatures. In practise these parties could/should be disjoint. It might
%     also make sense to use a different term for the party providing the storage
%     service.
%   \item We assume/decide that the quoter also obtains the signature, to verify
%     that nothing have gone wrong in computing the Merkle tree etc.
%   \item Be clear early on that we use \textquote{classical digital signature} to
%     refer to a typical non quotable digital signature.
%   \end{itemize}

\section{Introduction}
\label{sec:intro}
% https://disinfo.africa/early-detection-and-countering-hate-speech-during-the-2022-kenyan-elections-e0f183b7bdd1
% Clear examples of social media accounts impersonating news medias etc.
Digital signature schemes are a classical and widely used tool in modern
cryptography (the canonical reference is~\cite{1055638}, and~\cite{NIST-DSS}
contains some current standards).
A somewhat newer concept is \textit{quotable signature schemes}~\cite{kreu19}, which
% are digital signatures with the additional
% property that given a message $m$ and a digital signature $s$, one can extract a
% digital signature for a part of $m$ (a quote) from the digital signature for
% $m$, without knowing the secret key that was used to create the signature for
% $m$.Crucially, the extracted digital signature for the quote is signed with the
% same key that was used to sign $m$, without the Quoter knowing the private key
% or interacting with the owner of the key.One motivating example for quotable
% signatures is that they can be used for combating misinformation, by bolstering
% authentic content on social media.
% \emph{Quotable signature schemes}
are digital signature schemes with the additional property
that signatures are \textit{quotable} in the following sense.
The \emph{Signer} of a message $m$ generates a quotable signature $s$ for $m$
using a private key $\mathsf{sk}$.
Given a message~$m$ and the quotable signature $s$,
a \emph{Quoter}
(any third party)
can extract a second quotable signature $s'$
for a quote $q$ from $m$
without knowing $\mathsf{sk}$ or interacting with the original \emph{Signer}.
A quote can be any \textquote{allowable subsequence} of $m$.
We write $q\preceq m$ to indicate that $q$ is a quote from $m$.
This quotable signature $s'$ is still signed
with the private key $\mathsf{sk}$ of the \emph{Signer}
and hence authenticates the original \emph{Signer} as the author of the quote.
% We denote the algorithm that computes the quote by $\mathsf{Quo}$.
% The quotable signature $s'$ for the subsequence $q$
% is hence $s' = \mathsf{Quo}(m,q,s)$.
These signatures for quotes have the same required properties with respect to
verification and security as a standard digital signature, in addition to
allowing one to derive where content has been removed, relative to the quote.
% The algorithm \textsf{Quo} can produce a similar quotable signature
A signature for a quote is again a quotable signature
with respect to sub-quotes of the quote,
and neither authenticating a quote nor sub-quoting require access to the original
message.

Quotable signatures can be used to
mitigate the effects of fake news and disinformation.
% Fake news and disinformation
These are not new problems, and it is becoming increasingly apparent that they
are posing a threat for democracy and for society.
% https://www.weforum.org/agenda/2022/10/how-to-address-disinformation/
There is not one single reason for this, but one reason among many is a
fundamental change in how news is consumed: a transition is happening, where
explicit news products such as printed newspapers and evening news programs are
still consumed, but are increasingly giving way for shorter formats and snippets
of news on social media platforms~\cite{reutersNewsReport2019}.
However, people tend to be unable to recall from which news brand a story
originated when they were exposed to it on social
media~\cite{Newsbrandattribution18}.
This is problematic since the news media's image is an important heuristic when
people evaluate the quality of a news
story~\cite{doi:10.1080/1461670X.2013.856670}.
In addition, according to the Reuters Institute Digital News Report 2022~\cite{Reuters}, across markets, 54\% of those surveyed say they worry about identifying the difference between what is real and fake on the Internet when it comes to news, but people who say they mainly use social media as a source of news are more worried (61\%).

In recent years, a common approach to fighting back against fake news has been
flagging (potentially) fake news, using either manual or automatic detection
systems.
While this might be a natural approach, research has shown repeatedly that
flagging problematic content tends to have the opposite result, i.e., it
increases the negative effects of fake
news~\cite{2020ERCom...2h1003D,doi:10.1177/1529100612451018,10.1007/978-3-030-61841-4_16}.
This indicates that flagging problematic content is not sufficient and
alternative approaches need to be developed.
% found and put into use.

We present a method that complements flagging problematic content
with the goal
of mitigating the effect of fake news.
% to a further extent than so far possible.
Our idea builds on the observation that \textit{which} news media published a
news article is an important heuristic people use to evaluate the quality of the
article~\cite{doi:10.1080/1461670X.2013.856670}.
However, since people get their news increasingly via social media, it
is becoming more likely that they are not aware of who published the
news they are consuming.
% This is the issue that we aim at addressing.
To address this, we propose using quotable signatures to allow people on social
media to find out and be certain of where the text they are reading originates
from, and to verify that any modifications to the text were all allowed.
Specifically for news, the proposed idea is that a news media publishing an
article also publishes a quotable signature for the article signed with their
private key.
When someone shares a quote from the article, they then also include the
signature for the quote that is derived from the initial signature (without
access to the private key), which we emphasize is signed with the same key.
Finally, when one reads the quote, the signature can be checked, and it can be
verified from where the quote originates.

The idea of mitigating the effects of fake news and misinformation, by using
digital signatures to verify the source of media content, is one that has been
addressed by others.
One example is C2PA~\cite{C2PA}, which involves many companies, including Adobe,
the BBC, Microsoft, and Twitter.
C2PA focuses on providing a history of a published item, i.e.,
which device was used to capture it, how it has been edited and by whom, etc.
Thus, quotable signatures could be of interest to their approach.

Another issue involving fake news
is that news articles are perceived as more credible
if they contain attributed quotes~\cite{doi:10.1177/107769909807500108}.
This is misused by fake news to appear more credible by providing attributions for
their
content~\ShortLong{\cite{kenya,fakeGermReport,fakeSchmeichel,fakeBBCScreenshot,fakeBr}}{\cite{kenya,fakeGermReport,fakeSchmeichel,fakeBBCScreenshot,fakeBr,fakeWaPo}},
but can in turn be used to automatically detect fake news by considering the
existence and quality of
attributions~\cite{ABUSALEM2021100369,DBLP:conf/semco/TraylorSGS19,doi:10.1177/0002764219878224}
(among other things).
Quotable signatures, in contrast, could be used to sign quotes to make a strong
and verifiable connection between the original source and the quote.
On the other hand, fake news would generally not be able to link their quotes to
reputable sources, thereby providing another heuristic helping users to distinguish
between authentic and fake content.

% Expanding on the last paragraph, we observe that while the use case we focus on
% is the sharing of quotes from news articles, our system is not restricted to
% this use case.
Without major changes to the system, it could be extended to further
use cases such as signing Facebook and Twitter posts, official
governmental rules and regulations, scientific publications, etc.  For
all of these instances, an important feature of our system that we
have not used explicitly so far is that signing also binds the Signer,
meaning that the signing party cannot later deny having signed the
signed document.
% Additionally, with some modifications, the same
% approach presented here may be applicable to other forms of media,
% such as images and videos, which are also commonly used to spread fake news.

% \simon{Define notation somewhere, probably here. Want to use $\preceq$ something being a quote from some other text (this is stolen from the litterature).}

% Considerations/limitations: who decided who is reliable (presseetisk råd),
% how to get people and platforms to adopt (platform agnostic approach, see
% section ...). Is it effective (cite something, also open up for this needing
% further research).

We provide an overview over related work in \cref{sec:priorArt}.
In \cref{sec:QuotSigsMain}, we give a more thorough introduction to and definition of quotable signatures, and we show how we can realize quotable signatures using Merkle trees~\cite{Merkle80,Merkle89}.
We define a notion of security for quotable signature schemes, and prove that the notion is satisfied by our construction.
Additionally, we prove a number of bounds on the size and computational costs of quotable signatures obtained using Merkle trees.
\ShortLong{}{Finishing off the construction of quotable signatures from Merkle trees, we describe algorithms for signing, quoting, and verifying in \cref{sec:impl}.}
% In this section, we also consider some of the choices that one would have to
% make in an implementation of quotable signatures, and especially for a specific
% use case.
We revisit the application of quotable signatures to counter fake news in more detail in \cref{sec:QSandFN} and we conclude the paper with an outlook to future work in \cref{sec:futureWork}.
\ShortLong{ In the full version of this paper we also describe concrete algorithms for our construction of quotable signatures from Merkle trees.}{}

%%% Local Variables:
%%% mode: latex
%%% TeX-master: "../main"
%%% End:

\section{Related Work}\label{sec:priorArt}
Quotable signatures have been introduced in~\cite{kreu19}, which
suggests
% The idea of quotable signatures is due to Kreutzer, Niederhagen, Shirshak, and
% Simo Fhom~\cite{kreu19}, who suggest
constructing quotable signatures using Merkle trees and provides a rudimentary
complexity analysis. The authors also suggest using quotable signatures to mitigate the
effects of fake news. Compared to~\cite{kreu19}, we define a security model, and
prove that our construction is secure in this security model. Additionally, we
also provide proofs of our claims about the cost of using Merkle trees for
quotable signatures,\ShortLong{}{ provide concrete algorithms for quotable signatures from Merkle trees,}
and provide more in-depth considerations for why one could expect
this to be a good approach.

% how to integrate the approach into current infrastructures, both on the news
% organizations side, but also for distributing and authenticating the signatures.
% Additionally, we also describe prototypes for using quotable signatures that
% could be used for, e.g., performing small-scale tests of the effects of using
% quotable signatures for news articles, similar to the study
% in~\cite{doi:10.1126/sciadv.abl3844}.
% % test of Aslett, Guess, Bonneau, Nagler, and Tucker~\cite{doi:10.1126/sciadv.abl3844}.
% Finally, we provide a suggestion for how to
% use our approach without support from the platforms that the
% quotes are being shared on.

A concept closely related to quotable signature schemes is
\textit{redactable signature schemes} (RSSs). Simultaneously introduced
in~\cite{DBLP:conf/icisc/SteinfeldBZ01} (as \textit{Content Extraction
Signatures}) and~\cite{DBLP:conf/ctrsa/JohnsonMSW02}, RSSs essentially allow an
untrusted redactor to remove (\textquote{redact}) parts of a signed message,
without invalidating the signature. Often this requires modifying the signature,
but crucially, it is still signed with the original key, despite the redactor
not having access to the private key. Thus, quotable signatures share many
similarities with RSSs; if one considers a quotation as a redaction of all parts
of a text except for the quote, they are conceptually identical. Where quotable
signatures and RSSs differ is in the security they must provide. Both signature
schemes require a similar notion of unforgeability, but
an RSS must also guarantee that the redacted parts remain private. A standard
formulation is that an outsider not holding any private keys should
\textquote{not be able to derive any information about redacted parts of a
  message}, and even stronger requirements, such as transparency or
unlinkability, are not uncommon~\cite{DBLP:conf/IEEEares/BilzhausePS17}. Quotable
signatures have no such privacy requirements, allowing quotable signatures to be
faster. In fact, it is worth noting that there are scenarios where RSSs' notion
of privacy would be directly harmful to a quotable signature.
For instance, RSS would specifically make it
impossible to tell if a quote is contiguous or not, something that we consider
essential for a quotable signature scheme.
%\simon{Ideally this would be captured
%  by some formal requirement somewhere? Maybe look for context preserving RSSs?}
%\kim{We don't think we need to do more.}
To see the value of dropping the privacy requirement, we observe that some RSSs with
$O(n)$ performance may have $O(n)$ expensive public key cryptography
operations~\cite{DBLP:conf/acns/BrzuskaBDFFKMOPPS10,DBLP:conf/acns/SamelinPBPM12},
whereas quotable signatures can be obtained with $O(n)$ (cheap) symmetric
cryptographic operations (hashing), and only one expensive public key operation.
There are approaches obtaining RSSs using only one expensive operation, but they
either require many more cheap operations than quotable signatures do, or they result in considerably larger signatures, for example~\cite{DBLP:conf/secrypt/HiroseK13}.
%\joan{remove this? (also the previous sentence?) They claim to have a linear scheme too. An example of such is~\cite{DBLP:journals/iacr/HaberHZ16}, which uses $O(n^2)$ cheap operations (and obtains size $O(n^2)$ signatures).}
Early examples of RSSs had a weaker notion of privacy, but
still stronger than what we require. They require only hiding
of the redacted elements, not their location and number. Examples can be found
in~\cite{DBLP:conf/ctrsa/JohnsonMSW02,DBLP:conf/icisc/SteinfeldBZ01}. Their
approaches are similar to ours, also using Merkle trees, but we provide rigorous
proofs of the claimed performance, and our lack of privacy requirements
allows our scheme to be both more efficient and conceptually simpler. One
consideration that is very relevant for quotable signatures, \ShortLong{}{but seldom considered elsewhere, }is how a quote (redaction) being contiguous
will affect the complexity results. In a different
setting~\cite{DBLP:conf/dbsec/DevanbuGMS00} considers this question for Merkle
trees, but provides no rigorous proof.

%\joan{How about removing this paragraph?}
%As it can be seen, quotable signature schemes are closely related to RSSs, and
%an argument could be made for calling quotable signature schemes, e.g.,
%\textquote{privacyless redactable signature schemes}, and using the term
%redactor instead of quoter, etc. However, with privacy being a key-feature of
%RSSs, we believe that doing so would be needlessly confusing, and choose instead
%to follow the newer naming convention from~\cite{kreu19}.

Considering the motivating example again, approaches to mitigate the impact of
fake news, using either digital signatures or directly rating the source of the
content, have been proposed and tried before. One approach, serving as
inspiration for our approach, is~\cite{9912787}. They use digital
signatures to verify the authenticity of images and other forms of multimedia.
% Additionally, they provide a general description of a browser extension for
% authenticating the signatures.
One drawback of their implementation is that it requires the media to be
bit-for-bit identical to the version that was signed. Hence, the image can for
instance not be compressed or resized, and thus their solution is not compatible
with many platforms,
% currently process their media
% content,
e.g., Facebook compresses uploaded images,
% uploaded to their services
and many news websites resize images for different screen sizes.
% This is in sharp contrast to our approach, where we specifically handle partial
% quotes from larger pieces of text.
An example of directly rating the source of content, and flagging trustworthy
sources, can be found in \textquote{NewsGuard Ratings} (NG), which provides a
rating of trustworthiness for news sources. NG adds a flag that indicates if a
news source is generally trustworthy (green) or not (red) to websites and
outgoing links on websites. This approach has not been widely successful. For
example, the study in \cite{doi:10.1126/sciadv.abl3844} shows that NG's labels
have \textquote{limited average effects on news diet quality and fail to
  reduce misperceptions}. While this is somewhat related to our approach, there
are two major differences. (1) NG only flags content that directly links to the
source of the content with a URL.
% e.g. next to the URL bar for a website you are visiting, or next to a link in a
% post on a social media.
In contrast, our digital signature can be attached to any text quote. Hence, NG
only adds additional information when it is already straightforward to figure
out from where the content originates. Our approach also provides this
information where there might otherwise be no clear context. (2) NG focuses on
providing a rating for how trustworthy a news source is.
% , i.e., the labels tell the users if they can trust the source.
This approach is similar to the
typical approach of telling people when something might be problematic, which
tends to have the opposite result. % consider inserting citation here
In contrast, we focus solely on providing and
authenticating the source of a quote.

Summing up, the contributions of this paper is as follows. (1) We rigorously
define the notion of security that quotable signature schemes must satisfy. (2)
We rigorously prove the security of and analyze the complexity of, a quotable
signature scheme constructed using Merkle trees. (3) This provides a scheme for
quotable signatures that is more efficient than using an RSS for the same
purpose. \ShortLong{}{(4) We provide concrete algorithms for quotable signatures using Merkle trees.}

\section{Quotable Signatures}\label{sec:QuotSigsMain}
% One of the key concepts that our proposal is based on is that of a quotable
% signature. This is a digital signature for a message that is created in
% such a way that one can later quote parts of the message and extract a digital
% signature for the quote only from the message and its digital signature.
% The new digital signature of the quote is, crucially, signed with the same
% private key as the original message
% without needing access to the private key when quoting.

To construct a quotable signature scheme, we follow
% the KNSS approach~\cite{kreu19},
the approach suggested in~\cite{kreu19}
and use a combination of a classical digital
signature scheme~\cite{1055638}
and {Merkle trees}~\cite{Merkle80,Merkle89}.
% First, we introduce the general
% setting for quotable signatures in \cref{ssec:q-s-setting}. Then, in
% \cref{ssec:M-trees}, we introduce Merkle trees, and in
% \cref{ssec:get-q-sigs}, we complete the concept of quotable signatures by combining
% Merkle trees with classical digital signatures.

Before getting into the construction,
we summarize the setting of quotable signatures.
In \cref{ssec:security-model}, we define
the security notion that quotable signature schemes should satisfy.
Then, in \cref{ssec:M-trees}, we introduce Merkle trees,
in \cref{ssec:get-q-sigs} we construct a
quotable signature scheme and show it is secure,
and finally we analyze the complexity of the scheme
in \cref{ssec:m-tree-perf}.

% \subsection{The setting}\label{ssec:q-s-settinxg}
\paragraph{General setting for quotable signatures.}
A quotable signature scheme consists of four efficient algorithms, \textsf{QS =
  (KeyGen, Sign, Quo, Ver)}. These four algorithms are essentially the standard
three algorithms from a classical digital signature scheme
for key generation, signing, and verification,
with the added quoting algorithm $\mathsf{Quo}$.
To quote from a message, $\mathsf{Quo}$ allows
extracting a valid signature for the quote from the signature of the message
in such a way that it is still signed with the public key used to sign the
original message.
Additionally, it should be possible to derive from the signature of a quote
where tokens from the original message
have been removed
relative to the quote.

We refer to the involved parties
as the \textit{Signer}, the \textit{Quoter}, and the \textit{Verifier}.
We use $\lambda$ to denote the security parameter.
To summarize:
\begin{itemize}
\item $(\mathsf{sk,pk}) \leftarrow \mathsf{KeyGen}(1^{\lambda})$ takes as input the security
  parameter $1^\lambda$. It outputs a public key pair. This is typically done by the Signer once, offline as part of the initial setup.
\item $s \leftarrow \mathsf{Sig}_{\mathsf{sk}}(m)$ takes as input a secret key
  $\mathsf{sk}$ and a message $m$. It outputs a quotable signature for $m$. This is done by the Signer.
\item $s' \leftarrow \mathsf{Quo}(m,q,s)$ takes as input a message $m$, a quote $q$ from $m$, and a quotable signature $s$ for $m$.
  It outputs a quotable signature $s'$ for $q$, that is still signed with the secret key used to generate $s$.
  Verifying $s'$ does not require knowing~$m$.
  Note that $m$ and $s$ could have been obtained via an earlier quote operation.
  This is done by the Quoter.
\item $\top/\bot \leftarrow \mathsf{Ver}_{\mathsf{pk}}(q,s')$ takes as input a public key
  $\mathsf{pk}$, a quote (message) $q$, and a signature $s'$ for $q$. It outputs
  $\top$ if $s'$ is a valid signature for $q$ with respect to $\mathsf{pk}$, and
  $\bot$ otherwise. This is done by the
  Verifier.
\end{itemize}
% \begin{itemize}
% \item The {Signer} signs the original message (for example a news article,
%   document, post to a social media, image, or video) to get a digital signature
%   for the message using her or his private key,
%   $s \leftarrow \mathsf{Sig}_{\mathsf{pk}}(m)$.
% \item The {Quoter} quotes part of the message, and extracts a digital signature
%   for the quote, based on the quote, the message, and the digital signature for
%   the message (without the need for either public or private key), $s'
%   \leftarrow \mathsf{Quo}(m,q,s)$.
% \item The {Verifier} verifies that the digital signature (extracted by the
%   Quoter or directly from the Signer) is a valid digital signature for the quote
%   (or message) using the public key of the Signer and hence verifies that the
%   quote (or message) is signed by the Signer $\mathsf{Ver}_{\mathsf{pk}}(q,s') =
%   \top$.
% \end{itemize}
% 
\Cref{fig:general-setting} illustrates the typical interactions between the parties.

\begin{figure}[t!]
  \centering
  \resizebox{.7\columnwidth}{!}{\begin{tikzpicture}[>=stealth]
      \node [] (m) at (0,1.2) {$m$};
      \node [] (q) at (4,1.2) {$q$};
      \node [rectangle, align=center, text centered, draw=black, very thick,
      minimum height=1cm, minimum width=2cm] (Signer) at (0,0) {\textbf{Signer}\\$s \leftarrow \mathsf{Sig}_{\mathsf{sk}}(m)$};
      \node [rectangle, align=center, text centered, draw=black, very thick,
      minimum height=1cm, minimum width=2cm] at (4,0) (Quoter)
      {\textbf{Quoter}\\$s' \leftarrow \mathsf{Quo}(m,q,s)$};
      \node [rectangle, align=center, text centered, draw=black, very thick,
      minimum height=1cm, minimum width=2cm] (Verifier) at (8,0) {\textbf{Verifier}\\$\mathsf{Ver}_{\mathsf{pk}}(q,s')$};
      \node [] (a) at (8,-1.2) {Accept/Reject};
      \draw [->] (m) -- (Signer);
      \draw [->] (q) -- (Quoter);
      \draw [->] (Signer) -- node [above] {$m,s$} (Quoter);
      \draw [->] (Quoter) -- node [above] {$q,s'$} (Verifier);
      \draw [->] (Verifier) -- (a);
      %\draw [->] ($(Quoter.east)+(0,-.4)$) edge[out=330, in=210, looseness=1.2]
      %node [below] {$\stackrel{m \leftarrow q}{s \leftarrow s'}$} ($(Quoter.west)+(0,-.4)$);
      \draw [->] ($(Quoter.east)+(0,-.4)$) -- ++(.2,0) -- ++(0,-.4) node (tmp) {}
                 --  node [below] {$\stackrel{m \leftarrow q}{s \leftarrow s'}$} ($(Quoter.south west |- tmp)+(-.25,0)$)
                 |- ($(Quoter.west)+(0,-.4)$);
    \end{tikzpicture}}
  \caption{The general setting for a quotable signature.}
  \label{fig:general-setting}
\end{figure}
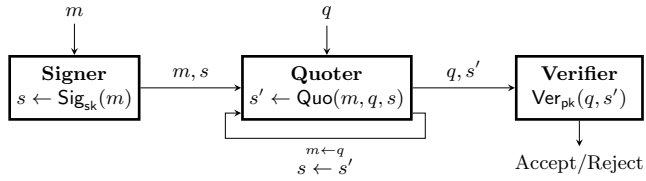

\subsection{Security Model}\label{ssec:security-model}
Taking inspiration from the RSS notion of unforgeability,
we define the security notion of quotable signatures schemes in
\cref{def:unforg}.
At its core, this is the standard notion of unforgeability for digital
signature schemes, with the additional requirement that the adversary's chosen
message cannot be a quote from any of the messages that the adversary sent to
the signing oracle.

\begin{definition}[Unforgeability]\label{def:unforg}
  Let $\mathsf{QS} = (\mathsf{KeyGen, Sign, Quo, Ver})$ be a quot\-able signature
  scheme.
  We say that $\mathsf{QS}$ is \emph{existentially unforgeable}, if for every
  probabilistic polynomial time adversary $\mathcal{A}$, the probability of the
  following experiment returning $1$ is negligible:\\[-2mm]

  \begin{pchstack}
    \pseudocode[mode=text]{
      $(\mathsf{pk,sk}) \leftarrow \mathsf{KeyGen}(1^{\lambda})$\\
      $(m^{*},s^{*}) \leftarrow \mathcal{A}^{\mathsf{Sign}_{\mathsf{sk}}(\cdot)}(\mathsf{pk})$\\
      % \t\t for $i = 1,2,\ldots,q$ let $m_i$ denote the $i$'th query to
      % $\mathsf{Sign}$.\\
      \pccomment{denote the queries that $\mathcal{A}$ make to the signing
        oracle by $m_1,m_2,\ldots,m_Q$.}\\
      \pcif $(\mathsf{Ver}_{\mathsf{pk}}(m^{*},s^{*}) = \top) \wedge (\forall k \in
      \{1,2,\ldots,Q\} \colon m^{*} \not\preceq m_{k})$\\
      \t \t \pcreturn 1}
  \end{pchstack}
  % Note, directly inspired by \textquote{PI-Cut-Choo and Friends: Compact Blind
  % Signatures via Parallel Instance Cut-and-Choose and More} from CRYPTO'22 and
  % \textquote{Accountable Redactable Signatures} paper.
\end{definition}

\subsection{Merkle Trees}\label{ssec:M-trees}
A Merkle tree (also known as a \textit{hash tree}) allows one to efficiently
and securely verify that one or more \emph{tokens} are contained in a longer sequence
of tokens, without having to store the entire sequence~\cite{Merkle80,Merkle89}.
Examples of this could
be words forming a sentence, sentences forming an article, or data blocks making up a file.
% Formally, we
% define tokens and sequences as follows:

% \begin{definition}[Tokens and sequences]
%   A token is an element in a token space $T$. A sequence of tokens is a
%   sequence of zero or more tokens, i.e., an element in $T^{*}$.

%   Additionally, we require that there is a well defined injective mapping from
%   the token space to the space of binary strings.
% \end{definition}

% The mapping is required, since our scheme will rely on hash functions, and in
% practice, all hash functions are constructed as functions from binary strings
% of arbitrary length to binary strings of fixed length. Considering our
% previous two examples, we observe that for data blocks, taking the identity
% mapping suffices. For words one such mapping could be the mapping of
% words to their UTF-8 representations. Henceforth, we assume that this mapping
% is always used on the tokens before we hash them if not stated otherwise.

Since our scheme will rely on hash functions, we assume that the tokens are
binary strings. Equivalently, one could assume an implicitly used, well defined
injective mapping from the token space to the space of binary strings. For data
blocks, the identity mapping suffices and for words one such mapping could be
the mapping of words to their UTF-8 representations.

The structure of a Merkle tree for a sequence of tokens is a binary tree, where
each leaf corresponds to a token from the sequence, with the leftmost leaf
corresponding to the first token, its sibling corresponding to the second token,
and so on. Each leaf is labeled with the hash of its token and each internal
node is labeled with the hash of the concatenation of the labels of its
children. Hence, the $i$'th internal node on the $j$'th level will be labeled as
\begin{align}
  u_{j,i} = H(u_{j+1,2i} \concat u_{j+1,2i+1}).
\end{align}

This way, one can show that any specific token is in the sequence by providing
the \textquote{missing} hashes needed to calculate the hashes on the path from
the leaf corresponding to the token to the root of the tree. 
Following established terminology, we call this the
\textit{verification path} for the token.\footnote{This use of
  \textquote{path} is slightly counter intuitive, since it refers to the
  hashes needed to calculate the hashes on the path from the leaf to the root,
  and hence not the nodes on this path but their siblings.}

\Cref{fig:Merkle-tree-jumps} shows the Merkle tree for a sequence
of words forming the sentence \textquote{The quick brown fox jumps over the
  dog}. The verification path for the word \textquote{jumps}
consisting of nodes $u_{3,5}$, $u_{2,3}$, and $u_{1,0}$
is highlighted in red.
Similarly, one can obtain the verification path for a subsequence of more than
just one token. In \cref{fig:Merkle-tree-jumps}, we also indicate
the verification path for the contiguous subsequence \textquote{the quick} in
blue. Note that the size of the verification path depends
not only on how many tokens are chosen, but also on where in the sequence they
are placed. In \cref{ssec:m-tree-perf}, we analyze how large the
verification path can become,
i.e., how many nodes need to be provided in the signature in the worst case.

% Assuming that a hash function $H$ is used for the Merkle tree in
% \cref{fig:Merkle-tree-jumps}, the label of the node $n_{3,4}$ would be
% $n_{3,4} = H(\text{jumps})$ and the label of $n_{3,5}$ would be
% $n_{3,5} = H(\text{over})$. Moving up a level, the label of $n_{2,2}$ would
% then be
% \begin{align}
%   n_{2,2} = H(n_{3,4}\concat{}n_{3,5}) = H(H(\text{jumps})\concat{}H(\text{over})),
% \end{align}
% where $\concat$ denotes concatenation.

\begin{figure}[t!]
  \centering
  \resizebox{.8\columnwidth}{!}{%
    \begin{tikzpicture}[<-,>=stealth,level/.style={sibling distance = 7.5cm/#1, level distance = 1cm}]
      \node [treenode] (root) at (0,0) {$u_{0,0}$}
      child{ node [s-treenode] {$u_{1,0}$}
        child{ node [treenode] {$u_{2,0}$}
          child{ node [treenode] {$u_{3,0}$}
            child{ node [leaf, draw=green, densely dotted ] {The}
            }
          }
          child{ node [treenode] {$u_{3,1}$}
            child{ node [leaf, draw=green, densely dotted ] {quick}
            }
          }
        }
        child{ node [treenode, draw=blue, ultra thick] {$u_{2,1}$}
          child{ node [treenode] {$u_{3,2}$}
            child{ node [leaf] {brown}
            }
          }
          child{ node [treenode] {$u_{3,3}$}
            child{ node [leaf] {fox}
            }
          }
        }
      }
      child{ node [treenode, draw=blue, ultra thick] {$u_{1,1}$}
        child{ node [treenode] {$u_{2,2}$}
          child{ node [treenode] {$u_{3,4}$}
            child{ node [leaf,draw=orange, densely dotted ] {jumps}
            }
          }
          child{ node [s-treenode] {$u_{3,5}$}
            child{ node [leaf] {over}
            }
          }
        }
        child{ node [s-treenode] {$u_{2,3}$}
          child{ node [treenode] {$u_{3,6}$}
            child{ node [leaf] {the}
            }
          }
          child{ node [treenode] {$u_{3,7}$}
            child{ node [leaf] {dog}
            }
          }
        }
      };
    \end{tikzpicture}
  }
  \caption{An example of a Merkle tree where the tokens are words and the
    sequence is a sentence. The verification path for the token
    \textquote{jumps} is highlighted in red ($u_{1,0}, u_{2,3}, u_{3,5}$), and
    the verification path for the subsequence \textquote{The quick} is
    highlighted in blue ($u_{1,1},u_{2,1}$).}
  \label{fig:Merkle-tree-jumps}
\end{figure}

In these examples, we have chosen a sequence of tokens
where the length of the sequence, i.e., the number of tokens, is a power of two.
If the sequence length is not a power of two,
% we need to define
% how to structure the Merkle tree.
% Here,
we require that the tree is \emph{heap-shaped},
i.e., all levels are filled, except for possibly the lowest level,
which is filled from the left up to some point, after which the lowest level
is empty.
%\Cref{fig:Merkle-of-ugly-length} illustrate this
%with the Merkle tree for the sequence
%\textquote{The quick brown fox jumps}.\simon{This illustration could be removed
%  to save space.}
%
%\begin{figure}[t!]
%  \centering
%  \resizebox{\columnwidth}{!}{%
%    \begin{tikzpicture}[<-,>=stealth,level/.style={sibling distance = 7.5cm/#1, level distance = 1.5cm}]
%      \node [treenode] (root) at (0,0) {$n_{0,0}$}
%      child{ node [treenode] {$n_{1,0}$}
%        child{ node [treenode] {$n_{2,0}$}
%          child{ node [treenode] {$n_{3,0}$}
%            child{ node [leaf] {The}
%            }
%          }
%          child{ node [treenode] {$n_{3,1}$}
%            child{ node [leaf] {quick}
%            }
%          }
%        }
%        child{ node [treenode] {$n_{2,1}$}
%          child{ node [leaf] {brown}
%          }
%        }
%      }
%      child{ node [treenode] {$n_{1,1}$}
%        child{ node [treenode] {$n_{2,2}$}
%          child{ node [leaf] {fox}
%          }
%        }
%        child{ node [treenode] {$n_{2,3}$}
%          child{ node [leaf] {jumps}
%          }
%        }
%      };
%    \end{tikzpicture}
%  }
%  \caption{A Merkle tree where the tokens are words, and the full sequence is
%    \textquote{The quick brown fox jumps}.}
%  \label{fig:Merkle-of-ugly-length}
%\end{figure}

\begin{samepage}
  \begin{remark}
    Observe that from the structure of the Merkle tree, one can see where in the
    sequence the quoted tokens are placed, and if they are sequential or
    discontinuous.
  \end{remark}
\end{samepage}

\subsection{A Quotable Signature Scheme}\label{ssec:get-q-sigs}
Using a Merkle tree, we can now
% it is now straightforward to
devise a scheme by which the Quoter can convince the Verifier that some quote is
contained in a larger text, if the Verifier is already in possession of the root
hash. The Quoter simply shares the verification path together with the quote,
and the Verifier verifies that this indeed leads to the original root hash. In
order to turn this into a quotable signature scheme, we include a classical
digital signature for the root hash, signed by the Signer, with the verification
path. Thus, letting \textsf{DS = (KeyGen$^{\mathsf{DS}}$, Sign$^{\mathsf{DS}}$,
  Ver$^{\mathsf{DS}}$)} be a classical digital signature scheme, our quotable
signature scheme can be described as follows:
\begin{itemize}
\item \textsf{KeyGen}: Identical to $\mathsf{KeyGen^{DS}}$.
\item \textsf{Sign}: Find the root hash of the Merkle tree and sign it with
  $\mathsf{Sign^{DS}}$.
\item \textsf{Quo}: Find the verification path of the quote. Together with the
  signature of the root hash, this forms the signature for the quote.
\item Find the root hash of the Merkle tree using the quote and its verification
  path. Use $\mathsf{Ver^{DS}}$ to verify the authenticity of the root hash.
\end{itemize}

% In the terms from \cref{fig:general-setting}, we have that
% $\mathsf{Sig}(M)$ is the digital signature for the root hash of the Merkle tree
% constructed from ${M}$ and $\mathsf{Sig}(Q)$ is both the verification path for
% ${Q}$ and $\mathsf{Sig}(M)$.

\subsubsection{Proof of Security}\label{ssec:mt-qs-security}
We will show that the construction of the previous section is secure with
respect to the notion of security introduced in \cref{def:unforg}, when
instantiated with a secure hash function and a secure classical signature
scheme. Before doing so, we observe that currently, our scheme is trivially
vulnerable to a forgery attack, as follows. An adversary obtains a
quotable signature for a message from a signing oracle and then simply replaces
the last two tokens on the lowest level with a single token, which is the concatenation of the
tokens' hashes.
\ShortLong{For example, using this attack on the message used in \cref{fig:Merkle-tree-jumps}, would give the message \textquote{The quick brown fox jumps over $H(\text{the})\concat{}H(\text{dog})$}.}{We illustrate this in \cref{fig:2nd-preimage-attack}, where we
have created a second preimage of the message used in
\cref{fig:Merkle-tree-jumps}.} However, there is an easy fix to this
vulnerability. Noting that the problem is that an adversary can claim
that an internal node is a leaf, we can prevent this by applying domain
separation in the form of adding one value to the leaves before hashing, and
another value to the internal nodes before hashing. Taking inspiration from
RFC 6962~\cite{rfc6962}, the Merkle trees are modified by prepending $00$ to
the leaves before hashing and $01$ to the internal nodes before hashing.
From now on, we implicitly assume that this is done.

\ShortLong{}{\begin{figure}[!t]
  \centering
  \resizebox{.8\columnwidth}{!}{%
    \begin{tikzpicture}[<-,>=stealth,level/.style={sibling distance = 7.5cm/#1, level distance = 1cm}]
      \node [treenode] (root) at (0,0) {$u_{0,0}$}
      child{ node [treenode] {$u_{1,0}$}
        child{ node [treenode] {$u_{2,0}$}
          child{ node [treenode] {$u_{3,0}$}
            child{ node [leaf] {The}
            }
          }
          child{ node [treenode] {$u_{3,1}$}
            child{ node [leaf] {quick}
            }
          }
        }
        child{ node [treenode] {$u_{2,1}$}
          child{ node [treenode]{$u_{3,2}$}
            child{ node [leaf] {brown}
            }
          }
          child{ node [treenode] {$u_{3,3}$}
            child{ node [leaf] {fox}
            }
          }
        }
      }
      child{ node [treenode] {$u_{1,1}$}
        child{ node [treenode] {$u_{2,2}$}
          child{ node [treenode] {$u_{3,4}$}
            child{ node [leaf] {jumps}
            }
          }
          child{ node [treenode] {$u_{3,5}$}
            child{ node [leaf] {over}
            }
          }
        }
        child{ node [treenode] {$u_{2,3}$}
          child{ node [leaf] {$H(\text{the})\concat{}H(\text{dog})$}
          }
        }
      };
    \end{tikzpicture}
  }
  \caption{A Merkle tree for the sequence \textquote{The quick brown fox jumps
      over $H(\text{the})\concat{}H(\text{dog})$}, which is a second preimage to the
    Merkle tree for the sequence \textquote{The quick brown fox jumps over the
      dog}.}
  \label{fig:2nd-preimage-attack}
\end{figure}}

We can now argue that the construction is secure.

\begin{theorem}\label{thr:unforg}
  Under the assumption that
  \begin{itemize}
  \item $H$ comes from a family of cryptographic hash function,
    % , i.e. no probabilistic
    % polynomial time adversary can find a preimage or second preimage of $H$ with
    % non-negligible probability.
  \item \textsf{DS = (KeyGen$^{\mathsf{DS}}$, Sign$^{\mathsf{DS}}$,
      Ver$^{\mathsf{DS}}$)} is an existentially unforgeable classical signature scheme,
  \end{itemize}
  $\mathsf{QS} = (\mathsf{KeyGen, Sign, Quo, Ver})$ constructed as described above,
  is an existentially unforgeable quotable signature scheme.
\end{theorem}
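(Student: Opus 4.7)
The plan is to give a standard reduction: if there is a probabilistic polynomial time adversary $\mathcal{A}$ winning the unforgeability game from \cref{def:unforg} against $\mathsf{QS}$ with non-negligible probability $\epsilon$, then we can either break the existential unforgeability of $\mathsf{DS}$ or find a collision for $H$, contradicting one of the two assumptions. First I would build a simulator that runs $\mathcal{A}$ on input $\mathsf{pk}$ and answers each signing query $m_k$ by constructing the (domain-separated) Merkle tree for $m_k$, computing its root hash $r_k$, obtaining a $\mathsf{DS}$-signature $\sigma_k$ on $r_k$ (either by forwarding to the $\mathsf{DS}$ signing oracle in the first sub-reduction, or by using an internally generated secret key in the second), and returning the resulting quotable signature to $\mathcal{A}$. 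Since this simulation is perfect, $\mathcal{A}$ still outputs a valid forgery $(m^{*},s^{*})$ with probability $\epsilon$, where $s^{*}$ consists of a root hash $r^{*}$, a $\mathsf{DS}$-signature $\sigma^{*}$ on $r^{*}$, and a verification path that recomputes $r^{*}$ from the tokens of $m^{*}$.

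I would then case-split on $r^{*}$. In the first case, $r^{*}\notin\{r_1,\ldots,r_Q\}$, so $(r^{*},\sigma^{*})$ is a valid $\mathsf{DS}$-forgery on a fresh message and the first sub-reduction wins. In the second case, $r^{*}=r_k$ for some $k$, but the winning condition forces $m^{*}\not\preceq m_k$; here I would walk the genuine Merkle tree for $m_k$ and the tree reconstructed from $m^{*}$ together with its verification path in lockstep starting at the root, and output the first pair of distinct preimages that hash to the same value as an $H$-collision. A standard averaging argument over the two cases then yields a non-negligible success probability for one of the sub-reductions, contradicting the assumptions and completing the proof.

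The main obstacle is the second case: I must show that $m^{*}\not\preceq m_k$ genuinely forces a collision along the downward traversal. The argument is inductive from the root: at each visited internal node both sides produce the same hash; if they never have distinct preimages then, descending into the child needed to recompute the path for $m^{*}$, we eventually conclude that every leaf claimed by $m^{*}$ sits at the same position and carries the same token as in $m_k$, so $m^{*}\preceq m_k$, contradicting the winning condition. The domain separation prepending $00$ to leaves and $01$ to internal nodes is crucial here, because otherwise an adversary could reshape the tree, as in the attack preceding \cref{thr:unforg}, and achieve a matching root without any direct collision; with domain separation, any reshaping immediately produces a preimage pair where one side starts with $00$ and the other with $01$, i.e., an $H$-collision. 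Making this inductive step fully precise, including the case in which the shape of the recomputed tree for $m^{*}$ differs from the heap-shaped tree for $m_k$ (so that the comparison has to identify the shallowest node at which shapes or contents diverge), is the bulk of the technical work.
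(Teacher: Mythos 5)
Your proposal is correct and follows essentially the same route as the paper's proof: case-split on whether the forged root hash is fresh (yielding a $\mathsf{DS}$-forgery) or coincides with a queried root, in which case $m^{*}\not\preceq m_k$ forces an $H$-collision, with domain separation disposing of the tree-reshaping attack. The paper locates the collision by ascending from a mismatching leaf toward the root rather than descending from the root as you propose, and it leaves the simulator/reduction bookkeeping implicit, but these are presentational differences, not a different argument.
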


We have to show that no probabilistic polynomial time adversary can win the unforgeability experiment in \cref{def:unforg} with non-negligible probability.

\begin{proof}  
  Assume that $\mathcal{A}$ is a probabilistic polynomial time adversary against the unforgeability of
  \textsf{QS}. We show that the probability of $\mathcal{A}$ being successful is
  negligible.
  % We will now show that $\mathcal{A}$ the
  % probability of $\mathcal{A}$ being successful cannot be non-negligible.
  Let $(m^{*},s^{*})$ be the output of $\mathcal{A}$, where
  $s^{*} = (\mathsf{Sig^{DS}_{sk}}(u_{0,0}^{*}), \{u_{i,j}^*\})$, i.e., the
  classical digital signature of the root hash and a (possibly empty)
  verification path.

  Consider first the case where the root hash $u_{0,0}^{*}$ of $m^{*}$ (found
  using $\{u_{i,j}^*\}$) is different from the root hashes of the queries
  $\mathcal{A}$ made to the signing oracle. In this case,
  $(u_{0,0}^{*}, \mathsf{Sig^{DS}_{sk}}(u_{0,0}^{*}))$ is a forgery against
  $\mathsf{DS}$, and since \textsf{DS} is assumed to be existentially
  unforgeable, this can only happen with negligible probability. Denote this
  probability as $\epsilon_{\mathsf{DS}}$.

  If this is not the case, there must be an $m_k$, such that the root hash of
  $m_k$ is $u_{0,0} = u_{0,0}^{*}$, but $m^{*} \not\preceq m_k$. Denote by
  $T^{*}$ the tree for $m^{*}$ (constructed using the verification path, if one
  is included) and by $T$ the tree for $m_k$.

  Consider first the case where all leaves, corresponding to tokens, in $T^{*}$
  are at a location in the tree, where there is also a leaf, corresponding to a
  token, in $T$. Since $m^{*} \not\preceq m_k$ there must be tokens
  $a^{*},a$ such that $a^{*} \in m^{*}$ and $a \in m_k$
  are at the same positions in their respective trees, and
  $a^{*} \ne a$. Observe that if
  $H(\mathtt{00}\concat a^{*}) = H(\mathtt{00}\concat a)$, we have
  found a collision to $H$. If
  $H(\mathtt{00}\concat a^{*}) \ne H(\mathtt{00}\concat a)$, let the
  nodes on the path between the leaf corresponding to $a^{*}$ and the root
  of $T^{*}$ be denoted by
  $u_{i,j_i}^{*},u_{i-1,j_{i-1}}^{*},\ldots,u_{1,j_1}^{*},u_{o,o}^{*}$ and the
  nodes on the path between the leaf corresponding to $a$ and the root of
  $T$ by $u_{i,j_i},u_{i-1,j_{i-1}},\ldots,u_{1,j_1},u_{o,o}$. Since
  $u_{i,j_i}^{*} \ne u_{i,j_i}$ and $u_{o,o}^{*} = u_{o,o}$, there
  exists a $0 \le \ell < j$ such that $u_{\ell,j_\ell}^{*} = u_{\ell,j_\ell}$ and
  $u_{\ell+1,j_{\ell+1}}^{*} \ne u_{\ell,j_{\ell+1}}$. Thus, $u_{\ell+1,j_{\ell+1}}^{*} $ and
  $ u_{\ell,j_{\ell+1}}$ (together with their siblings and $\mathtt{01}$) form a
  collision.

  Consider now the case where there is a leaf, corresponding to a token, in
  $T^{*}$ that is not at a location in the tree, where there is a leaf,
  corresponding to a token, in $T$. In this case there must be nodes
  $u_{i,j}^{*} \in T^{*}$ and $u_{i,j} \in ^T$ at the same position in their
  respective trees such that one of them is internal and the other
  corresponds to a token. If $u_{i,j}^{*}$ and $u_{i,j}$ do not have the same
  label, we can apply the method from the precious paragraph to find a
  collision. If they have the same label, we must have two nodes
  $u_{i+1,2j}, u_{i+1,2j+1}$ in $T$ or $T^{*}$, and a token $a$ in $m^{*}$ or
  $m_i$ such that
  $H(\mathtt{01}\concat{}u_{i,j}\concat{}u_{i,j+1}) = H(\mathtt{00}\concat{}a)$,
  and we have found a collision.

  We observe that in all cases, we have found a collision for $H$. Since $H$ is
  assumed to be secure, and hence collision resistant, this can happen only
  negligible probability. Denote this probability as $\epsilon_{H}$.

  %We now see that $\mathcal{A}$'s advantage is at most
  %$\epsilon_{\mathsf{DS}} + \epsilon_{H}$, which is negligible.

  Hence,
  $\mathcal{A}$'s advantage
  of at most $\epsilon_{\mathsf{DS}} + \epsilon_{H}$
  is negligible.
\end{proof}

\subsection{Performance}\label{ssec:m-tree-perf}
\cref{tab:theo-perf} shows the cost of our quotable signature scheme
for each of the three parties.
This is measured in terms of computation
due to the number of required hash operations
and classical signature operations
as well as in terms of the size of the generated signature
due to the required hash values and classical signatures,
presumably the dominant operations.
%\footnote{Results in the table only count the digital signature and hashes, since these would generally dominate the other values because of their sizes.}.
%\kim{Joan, I think your footnote should be incorporated into Ruben's updated text instead.}
In all cases, we assume that the message $m$ has length $n$, i.e.,
$m$ consists of $n$ tokens. For the Quoter and the Verifier, we additionally
assume that the quote has length $t \le n$. 

\begin{table}[t]
  \centering
  \caption{Theoretical bounds on the performance of our version of a quotable
    signature. For the Quoter, we consider both if we allow quoting arbitrary
    tokens from the sequence, and when we require that the quoted tokens must be
    consecutive.}
  \label{tab:theo-perf}
  \small
  \setlength\tabcolsep{4pt}
  \begin{tabular}{l|cc}
    & \textbf{Computation Time} & \textbf{Signature Size} \\ \hline
    \textbf{The Signer} & $2n-1$ hashes and & 1 classical signature\\
    & 1 classical signature & \\ \hline
    \textbf{The Quoter} & \\
    Arbitrary & $2n-1$ hashes & 1 classical signature, at most \\
    && $t(\lceil \log{n} \rceil - \lceil \log{t}\rceil - 1)$ \\
    && $+ 2^{\lceil \log{t} \rceil}$ hashes \\
    Consecutive & $2n-1$ hashes & 1 classical signature, at most \\
    && $2 \lceil \log{n} \rceil - 2$ hashes \\ \hline
    \textbf{The Verifier} & 1 classical verification & ---\\
    & and up to $2n-1$ hashes &
  \end{tabular}
\end{table}

To put the results into context, running the command \texttt{openssl speed} on a modern
laptop shows that it is capable of computing
hundreds of thousands or even millions of hashes every second (depending on
the size of the data being hashed and the hash algorithm being used).
Additionally, a classical digital signature only takes a fraction of a second
create or verify. Thus, it is nearly instantaneous to generate/quote/verify a
quotable signature, even for sequences and quotes that are thousands of
tokens long.

The cost for the Signer, the Quoter, and the Verifier
is derived as follows.
%We now turn our attention to proving these results, which we will do in the
%following three sections.

\subsubsection{The Signer}
\ShortLong{}{Computing the cost for the Signer is straightforward. }To generate the Merkle
tree, the Signer needs to compute $2n-1$ hashes. To create the quotable digital
signature for $m$, she creates a classical digital signature for the root hash.
This classical digital signature is the Signer's signature
for her message $m$.

\subsubsection{The Quoter}
The Quoter also has to generate the entire Merkle tree, from which he can
extract the verification path for the quote he wishes to make. However, the
size of the verification path (and hence the signature for the quote) depends
on the size of the quote, and where in the text the quote is located. The most
simple case is when just one token is quoted, in which case the size of the
verification path is at most $\lceil \log{n} \rceil$, which, together with the
classical signature for the root hash, forms the signature for the quote.
Similarly, as shown in the following, the worst case can be
obtained by quoting every second token, in which case the Quoter would need
$\left\lceil \frac{n}{2} \right\rceil$ hashes on the verification
path.\footnote{Of course, algorithms can be adapted to include the entire text instead in such (rare) cases where that might require less space.}

In \cref{prop:size-of-signature-arbitrary}
we quantify the worst-case size of the verification path (and
hence the signature) for the quote in terms %$n$ and $t$,
of message and quote lengths.
%% RN: $t$ has not yet been introduced but comes in the following proposition
In \cref{prop:size-of-sig-contiguous}, we consider the special case
where we require that the quote be contiguous.

\begin{proposition}\label{prop:size-of-signature-arbitrary}
  For a message $m$ of size $n$ tokens and a quote of size $t$ tokens, the
  worst-case size of the verification path of the quote is at most
  \begin{align}
    t(\lceil \log{n} \rceil - \lceil \log{t}\rceil -1) + 2^{\lceil\log t\rceil}.
    % - \left\lfloor \frac{I}{2k}
    % \right\rfloor - \max \left\{ 0, t - 2^{\lfloor \log{t} \rfloor} - \left\lceil \frac{2^{\lceil \log{n}
    %     \rceil}-I - k}{2k} \right\rceil \right\}
  \end{align}
  % where $I = 2^{\lceil \log{n}\rceil} - 2(n - 2^{\lfloor \log{n} \rfloor})$ is the number of
  % additional nodes that are space for at the deepest level.
\end{proposition}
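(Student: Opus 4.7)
The plan is to bound the worst-case verification-path size by a counting argument on the marked nodes of the Merkle tree. Write $T = \lceil \log t \rceil$ and $N = \lceil \log n \rceil$, and call a node \emph{marked} if it is an ancestor (including itself) of one of the $t$ quoted leaves; let $M$ denote the set of marked nodes and $m_L$ the number of marked nodes at level $L$. Since the Merkle tree is heap-shaped with $n$ leaves, every internal node has exactly two children, and the verification path is precisely the set of children of internal marked nodes that are themselves unmarked.

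The first step is to establish the identity $|\text{VP}| = |M| - 2t + 1$. The $|M| - t$ internal marked nodes contribute $2(|M| - t)$ parent-to-child slots; the marked children among these slots account for exactly the $|M| - 1$ non-root marked nodes (every marked non-root node has a marked parent), so the remaining $2(|M| - t) - (|M| - 1) = |M| - 2t + 1$ slots are precisely the verification-path nodes.

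The second step is to bound $|M|$ level by level. Trivially $m_L \le 2^L$ because there are at most $2^L$ nodes at level $L$, and $m_L \le t$ because $t$ leaves have at most $t$ distinct ancestors at any level. Splitting the sum at $L = T$ (the level where $2^L$ first reaches $t$) gives
\begin{equation}
|M| = \sum_{L=0}^{N} m_L \le \sum_{L=0}^{T-1} 2^L + \sum_{L=T}^{N} t = (2^T - 1) + (N - T + 1)\, t.
\end{equation}
Combining with the identity from the first step yields $|\text{VP}| \le 2^T + (N - T - 1)\, t$, which is the claimed bound.

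The main obstacle is setting up the counting identity cleanly; once it is in place, the rest is elementary arithmetic. One should also check that the heap-shape assumption does not break the identity, which it does not because only leaf positions may be missing in a heap-shaped tree, leaving every internal node with two children. For completeness, tightness can be confirmed by placing the $t$ quoted leaves in $t$ distinct level-$T$ subtrees in such a way that every pair of level-$T$ siblings contains at least one marked node; this saturates all the per-level bounds $m_L = \min(t, 2^L)$ simultaneously.
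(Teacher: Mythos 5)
Your proof is correct, and it takes a genuinely different and more streamlined route than the paper's. The paper first establishes, via a sequence of structural lemmas, that one may restrict attention to $n$ a power of two (\cref{lem:quote-when-n-is-not-pow2}) and to a canonical extremal quote whose \enquote{forest of independent trees} has nearly equal heights (\cref{lem:pow2-max-diff-in-height,lem:quote-when-n-is-pow2}); it then counts hashes level by level for that extremal quote. You instead set up the exact identity $|\mathrm{VP}| = |M| - 2t + 1$ (where $M$ is the set of ancestors of quoted leaves) by a double count of parent--child slots, and then bound $|M|$ using the trivial per-level bound $m_L \le \min(2^L, t)$; the split of the sum at level $T = \lceil \log t \rceil$ immediately gives the stated bound. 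This avoids the reduction to powers of two entirely, handles non-contiguous quotes, arbitrary $n$, and heap-shaped trees in one pass, and makes the role of $\lceil \log t\rceil$ transparent as the level at which $2^L$ first meets $t$. What the paper's longer route buys in exchange is an explicit description of the worst-case quote (useful for understanding when the bound is attained), whereas your tightness sketch is only a one-line remark; the slot-counting identity, by contrast, deserves the careful statement you gave it, since it is the crux of the argument and relies on the observation (which you correctly make) that heap shape ensures every internal node has exactly two children.
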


\begin{proof}
  In \cref{lem:quote-when-n-is-pow2}, we consider the case where $n$ is a
  power of two. In this case, we identify a worst-case set of $t$ leaves of the
  Merkle tree on $n$ tokens.
  In \cref{lem:quote-when-n-is-not-pow2}, we establish that it is
  sufficient to consider $n$ a power of two.
  % in order to generate the largest possible
  % signature,
  % the paths between the tokens and the root (we will refer to these
  % as root-token paths) need to be independent as far up the tree as possible.
  % On each level of the Merkle tree, we get at most one hash per token in the
  % quote, and as soon as two root-token paths meet, they share the rest of
  % their nodes all the way to the root, and hence we get at at most one hash
  % per level, rather than the two we could get before they met. Hence, even if
  % joining two root-token paths allows a different (or the joint) root-token
  % path to require more hashes, the total size of the signature is not
  % increased.

  %% 
  %% Show that assumptions are correct when $n$ is a power of two. 
  %% 

  To argue about the size of the signature, we consider what we call the
  \textit{forest of independent trees} for a quote. To find the forest of
  independent trees for a quote, we do the following. For each token in the
  quote, consider the path between the node corresponding to that quote and
  the root (the root-token path). Define the \textit{independent tree
    corresponding to that token} to be the subtree rooted in the highest node
  on the root-token path, which is not on the root-token path for any other
  token in the quote.
  The forest of
  independent trees for the quote is now the collection of the independent
  trees of all the tokens in the quote. In \cref{fig:cut-level-merkle},
  we consider a message of size $n = 8$ and a quote of size $t =3$, quoting
  the first, third, and fifth token. The red line indicates a separation
  between the independent trees and the nodes that are on multiple root-token
  paths. The forest of independent trees consists of the trees rooted in
  $u_{2,0}, u_{2,1}$, and $u_{1,1}$.

  \begin{figure}[!t]
    \centering
    \resizebox{.8\columnwidth}{!}{%
      \begin{tikzpicture}[<-,>=stealth,level/.style={sibling distance = 7.5cm/#1, level distance = 1cm}]
        \node [treenode] (root) at (0,0) {$u_{0,0}$}
        child{ node [treenode] {$u_{1,0}$}
          child{ node [treenode] {$u_{2,0}$}
            child{ node [treenode] {$u_{3,0}$}
              child{ node [leaf, draw=orange] {\quad}
              }
            }
            child{ node [treenode] {$u_{3,1}$}
              child{ node [leaf] {\quad}
              }
            }
          }
          child{ node [treenode] {$u_{2,1}$}
            child{ node [treenode]{$u_{3,2}$}
              child{ node [leaf, draw=orange] {\quad}
              }
            }
            child{ node [treenode] {$u_{3,3}$}
              child{ node [leaf] {\quad}
              }
            }
          }
        }
        child{ node [treenode] {$u_{1,1}$}
          child{ node [treenode] {$u_{2,2}$}
            child{ node [treenode] {$u_{3,4}$}
              child{ node [leaf, draw=orange] {\quad}
              }
            }
            child{ node [treenode] {$u_{3,5}$}
              child{ node [leaf] {\quad}
              }
            }
          }
          child{ node [treenode] {$u_{2,3}$}
            child{ node [treenode] {$u_{3,6}$}
              child{ node [leaf] {\quad}
              }
            }
            child{ node [treenode] {$u_{3,7}$}
              child{ node [leaf] {\quad}
              }
            }
          }
        };
        \draw[-,dotted,thick,draw=red] (-7,-1.5) -- (0,-1.5) -- (0,-0.5) -- (7,-0.5);
      \end{tikzpicture}
    }
    \caption{A Merkle tree for a sequence of size $n = 8$ and a quote of size
      $t = 3$.}
    \label{fig:cut-level-merkle}
  \end{figure}

  \begin{lemma}\label{lem:pow2-max-diff-in-height}
    If $n$ is a power of two, the heights of the trees in the independent
    forest for a quote that maximizes the size of the signature can differ by at
    most $1$.
  \end{lemma}

  \begin{proof}
    Assume towards a contradiction that $Q$ is a quote that maximizes the size
    of the signature for $Q$ such that the difference between the heights of the smallest and
    largest trees in the forest of independent trees for $Q$ is at least $2$.
    Let $A$ be the root of a tree of minimal height in the forest of
    independent trees, and let $B$ be its sibling. Note that $B$ is also the
    root of a tree in the forest of independent trees (otherwise the tree
    rooted at $A$ would not be of minimal height). Additionally, let $C$ be
    the root of a tree of maximal height in the forest of independent trees.
    We illustrate this in \cref{fig:small-height-diff}.

    \begin{figure}[!t]
      \centering
      \resizebox{.8\columnwidth}{!}{%
        \begin{tikzpicture}[<-,>=stealth,level/.style={sibling distance = 3cm/#1, level distance = 1cm}]
          \node [treenode] (root) at (0,0) {$C$}
          child{ node [treenode] {$L$}
            child{ node [treenode] {$D_1$}
            }
            child{ node [treenode] {$D_2$}
            }
          }
          child{ node [treenode] {$R$}
            child{ node [treenode] {$D_3$}
            }
            child{ node [treenode] {$D_{4}$}
            }
          };
          \draw[-,dotted,thick,draw=red] (-1,.5) -- (1,.5);
          \draw[->,dotted] (root) -- (0,1);

          \node [treenode] (root2) at (-7,-1.5) {$P$}
          child{ node [treenode] {$A$}
          }
          child{ node [treenode] {$B$}
          };
          \draw[-,dotted,thick,draw=red] (-9,-2) -- (-5,-2);
          \draw[->,dotted] (root2) -- (-7,-.5);
        \end{tikzpicture}
      }
      \caption{Note that there might be trees rooted at $A, B, D_1, D_2, D_3,$
        and $D_4$, which we have omitted drawing, but by our assumption, the
        trees rooted at $D_1,D_2,D_3,$ and $D_4$ must be at least as high as the
        ones rooted at $A$ and $B$.}
      \label{fig:small-height-diff}
    \end{figure}

    Observe that we can now create a quote $Q'$ requiring more hashes than
    $Q$, by changing $Q$ in the following ways:
    \begin{itemize}
    \item Instead of quoting one token from the tree rooted at $A$ and one token
      from the tree rooted at $B$, $Q'$ quotes only one token from the tree
      rooted at $P$.
    \item Instead of quoting just one token from the tree rooted at $C$, $Q'$
      quotes one token from the tree rooted at $L$ and one token from the tree
      rooted at $R$. 
    \end{itemize}
    It is clear that $Q$ and $Q'$ quote equally many tokens and that the
    forest of independent trees for $Q'$ is only changed from the forest for
    $Q$ in the trees that involves $A,B,$ and $C$. The new situation is
    illustrated in \cref{fig:small-height-diff2}.

    \begin{figure}[!t]
      \centering
      \resizebox{.8\columnwidth}{!}{%
        \begin{tikzpicture}[<-,>=stealth,level/.style={sibling distance = 3cm/#1, level distance = 1cm}]
          \node [treenode] (root) at (0,0) {$C$}
          child{ node [treenode] {$L$}
            child{ node [treenode] {$D_1$}
            }
            child{ node [treenode] {$D_2$}
            }
          }
          child{ node [treenode] {$R$}
            child{ node [treenode] {$D_3$}
            }
            child{ node [treenode] {$D_{4}$}
            }
          };
          \draw[-,dotted,thick,draw=red] (-2,-0.5) -- (2,-0.5);
          \draw[->,dotted] (root) -- (0,1);

          \node [treenode] (root2) at (-7,-1.5) {$P$}
          child{ node [treenode] {$A$}
          }
          child{ node [treenode] {$B$}
          };
          \draw[-,dotted,thick,draw=red] (-8,-1) -- (-6,-1);
          \draw[->,dotted] (root2) -- (-7,-0.5);
        \end{tikzpicture}
      }
      \caption{Note that there might be trees rooted at $A, B, D_1, D_2, D_3,$
        and $D_4$, which we have omitted drawing, but by our assumption, the
        trees rooted at $D_1,D_2,D_3,$ and $D_4$ must be at least as high as the
        ones rooted at $A$ and $B$.}
      \label{fig:small-height-diff2}
    \end{figure}

    If each of the trees rooted at $A$ and $B$ contributed with $k$ hashes to
    $Q$, then the tree rooted at $C$ contributed with $k'+2$ hashes, where
    $k' \ge k$. In total, $A,B,$ and $C$ contributed $2k + k' + 2$ hashes.
    However, in $Q'$ we see that the tree rooted at $P$
    contributes $k+1$ hashes, and each of the trees rooted at $L$ and $R$
    contributes $k' + 1$ hashes, for a total of $k + 2k' + 3$ hashes. But since
    $k' \ge k$, we have that \ShortLong{$k + 2k' + 3 \ge 2k + k' + 3 > 2k + k' + 2$,}{ 
    \begin{align}
      k + 2k' + 3 \ge 2k + k' + 3 > 2k + k' + 2,
    \end{align}}
    contradicting that $Q$ maximizes the size of the signature.
  \end{proof}

  \begin{lemma}\label{lem:quote-when-n-is-pow2}
    When $n$ is a power of two, we can assume that the quote generating the
    largest signature has the properties that
    \begin{myclaim}
    \item\label{lem:quote-when-n-is-pow2-pt1} the heights of the trees in the independent forest for the quote
      differ by at most $1$,
    \item\label{lem:quote-when-n-is-pow2-pt2} for each tree in the forest of independent trees, the left-most leaf
      corresponds to the token that is quoted, and
    \item\label{lem:quote-when-n-is-pow2-pt3} the trees in the forest of independent trees are arranged with the
      smallest trees first.
    \end{myclaim}
  \end{lemma}

  \begin{proof}
    \cref{lem:quote-when-n-is-pow2-pt1} follows immediately from
    \cref{lem:pow2-max-diff-in-height}. Further,
    \cref{lem:quote-when-n-is-pow2-pt2} follows from observing that we can
    bring any tree to this form simply by swapping the children of some of the
    nodes on the path to the leaf corresponding to a quoted token (hereby
    changing which token is quoted, but not how many are quoted), and that these
    swaps do not affect the size of the signature. Finally,
    \cref{lem:quote-when-n-is-pow2-pt3} follows from observing that if two
    nodes are on the same level of the Merkle tree, and the labels of both are
    known, then we can \textquote{swap} the subtrees that they are roots of without
    affecting the size of the signature. By \textquote{swapping}, we mean that if the
    $i'$th leaf in the first node's subtree corresponds to a quote before the
    swap, then the $i$'th leaf in the second node's subtree corresponds to a
    quote after the swap, and so on. To see that this does not affect the size of the
    quote, note that outside of the two subtrees, nothing has changed; the hash
    of both nodes is still known. Additionally, from the first subtree we now
    get as many hashes as we got from the second subtree before the swap, and
    vice versa.
  \end{proof}

  \cref{lem:quote-when-n-is-pow2} implies that for any $n$ a power of two
  and $t\le n$, we need only consider one choice of which tokens are quoted.
  For example, \cref{fig:cut-level-merkle} shows the only quote of size
  $t = 3$ in a tree of size $n = 8$ that we need to consider.

  \begin{lemma}\label{lem:quote-when-n-is-not-pow2}
    For any message $m$ of length $n$ and quote $Q$ of length $t$, there is a
    quote $Q'$ of length $t$ from a message $m'$ of length
    $2^{\lceil \log{n} \rceil}$ such that the signature for $Q'$ is no smaller than the signature for $Q$.
  \end{lemma}

  \begin{proof}
    For fixed $m$ and $Q$, we create $m'$ by adding tokens to $m$ until
    $|m'| = 2^{\lceil \log{n} \rceil}$. We now create $Q'$ from $Q$ by going over each
    quote $q$ in $Q$.
    \begin{enumerate}
    \item If the leaf corresponding to $q$ in the Merkle tree for $m$ is on the
      deepest level, we quote the same token in $m'$.
    \item If the leaf corresponding to $q$ in the Merkle tree for $m$ is not on
      the deepest level, there is an internal node in the Merkle tree for $m'$
      at the location of the leaf in $m$. We quote the token corresponding to
      its left child, which is a leaf.
    \end{enumerate}
    Clearly, the tokens in $Q'$ from case
    1 contribute with the same number of hashes to the signature for $Q'$ as
    the corresponding ones did to the signature for $Q$, and the tokens from
    case 2 contribute with exactly one more hash.
    %to the signature for $Q'$
    %than the corresponding ones did to the signature for $Q$.
    Hence, the
    signature for $Q'$ is at least as large as the signature for $Q$.
  \end{proof}

  We are now ready to derive the claim in
  \cref{prop:size-of-signature-arbitrary}. For any message $m$ and quote $Q$ we
  can assume that $|m| = n$ is a power of two, i.e.,
  $n = 2^{\lceil \log{n} \rceil}$ (otherwise
  \cref{lem:quote-when-n-is-not-pow2} allows us to instead consider an
  $m'$ that is a power of two), and that $Q$ has size $|Q| = t$ and exactly
  the structure described in \cref{lem:quote-when-n-is-pow2}.

  There are $t$ trees in the forest of independent trees for the quote, and
  all the way up to (but not including) their roots, each of these trees
  provides one hash per level. The roots of the trees in the forest are on the
  deepest level with less than $t$ nodes and the first level with more than
  $t$ nodes (if $t$ is a power of two, all roots are instead on the level with
  exactly $t$ nodes). Hence, all levels that are at depth more than
  $\lceil \log{t} \rceil$ contributes with $1$ hash per tree, for a total of
  \begin{math}
    t (\lceil \log{n} \rceil - \lceil \log{t} \rceil)
  \end{math}
  hashes. Additionally, we need to count how many hashes we get from the level
  at depth $\lceil \log{t} \rceil$. On this level, every node is either a root of an
  independent tree or a child of a root of an independent tree. In the first
  case, the hash of the node is calculable from information from lower
  levels. In the second case, for every pair of siblings, one of the nodes'
  hash is calculable from information from lower levels (the one on a
  root-token path for a token corresponding to a quoted token) and the other
  nodes' hash must be provided by the signature. Since there are $2^{\lceil \log{t}
    \rceil}$ nodes on this level, and $t$ independent trees, the
  signature must provide $2^{\lceil \log{t} \rceil} - t$ hashes on this level.

  In total, this shows that an upper bound on the number of hashes provided by
  the signature for a quote of $t$ tokens from an $n$ tokens sequence is
  \begin{align}
    &t(\lceil \log{n} \rceil - \lceil \log{t} \rceil) + 2^{\lceil \log{t} \rceil} - t \\
    =& t(\lceil \log{n} \rceil - \lceil
       \log{t} \rceil - 1) + 2^{\lceil \log{t} \rceil},
  \end{align}
  which finishes the proof of
  \cref{prop:size-of-signature-arbitrary}.
\end{proof}

\begin{corollary}
  For a message of size $n$ tokens and any quote, the
  worst-case size of the verification path of the quote is $\left\lceil \frac{n}{2}\right\rceil$.
\end{corollary}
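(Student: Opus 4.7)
The plan is to derive the corollary from \cref{prop:size-of-signature-arbitrary} by maximising $f(t) = t(\lceil \log n \rceil - \lceil \log t \rceil - 1) + 2^{\lceil \log t \rceil}$ over $t \in \{1, \dots, n\}$. Writing $k = \lceil \log n \rceil$, I would partition the range into dyadic blocks $(2^{j-1}, 2^j]$, so that on each block $\lceil \log t \rceil = j$ is constant and $f$ reduces to an affine function of $t$ with slope $k - j - 1$.

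A case split on the sign of the slope identifies the block maxima: for $j < k - 1$ the maximum is $2^j(k-j)$, attained at the right endpoint $t = 2^j$; for $j = k - 1$ the function is the constant $2^{k-1}$; and for $j \geq k$ the function is decreasing. Comparing the candidates $2^j(k-j)$ for $j = 0, \dots, k-1$ via the ratio $\tfrac{2(k-j-1)}{k-j}$ shows that $j \mapsto 2^j(k-j)$ is nondecreasing for $j \leq k-2$ and decreasing for $j \geq k-1$, with $g(k-2) = g(k-1) = 2^{k-1}$. The overall maximum of $f$ is therefore $2^{k-1}$, and when $n$ is a power of two this equals $n/2 = \lceil n/2 \rceil$, which establishes the corollary.

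The main obstacle is the non-power-of-two case, where $2^{k-1}$ can strictly exceed $\lceil n/2 \rceil$ and the estimate coming straight from \cref{prop:size-of-signature-arbitrary} is therefore too loose. To recover the tight bound here, I would argue by induction on $n$ directly on the heap-shaped tree, splitting at the root into subtrees of sizes $n_L + n_R = n$. A short counting argument on the heap-shape shows that at least one of $n_L, n_R$ is even whenever $n \geq 3$, which makes $\lceil n_L/2 \rceil + \lceil n_R/2 \rceil = \lceil n/2 \rceil$. A case analysis on whether the quote lies entirely within $T_L$, entirely within $T_R$, or across both subtrees---combined with the induction hypothesis and accounting for the single extra hash contributed by the root of a quote-free subtree via $\lceil n_L/2 \rceil + 1 \leq \lceil n/2 \rceil$ when $n_R \geq 1$---then yields the desired bound in every case.
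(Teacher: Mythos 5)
Your proposal is correct, and it in fact supplies a proof that the paper does not actually give. The paper leaves this corollary without a derivation, apparently treating it as immediate from \cref{prop:size-of-signature-arbitrary} together with the earlier informal remark that the worst case is attained by quoting every second token. But as you correctly observe, maximizing the bound from \cref{prop:size-of-signature-arbitrary} over all $t$ yields $2^{\lceil\log n\rceil - 1}$, which strictly exceeds $\lceil n/2\rceil$ whenever $n \notin \{2^k-1, 2^k\}$: for $n = 5$ the optimized proposition bound is $4$, while the true worst case (and the corollary's claim) is $3$. Similarly, quoting every second token does not attain $\lceil n/2\rceil$ for odd $n$ that is not a power of two (it yields only $\lfloor n/2\rfloor$). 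So the corollary is not a formal consequence of the proposition, and your direct induction on the heap-shaped tree is genuinely needed, not merely an alternative. Your parity observation is sound: in a heap-shaped tree with $n \geq 3$ leaves, one of the two subtrees of the root is a perfect tree, so one of $n_L$, $n_R$ is a power of two $\geq 2$ and hence even, giving $\lceil n_L/2\rceil + \lceil n_R/2\rceil = \lceil n/2\rceil$; and the three-case split (quote in $T_L$ only, $T_R$ only, or both) together with the slack $\lceil n_R/2\rceil \geq 1$ closes the induction.

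Two smaller remarks. First, once the induction is in place it handles all $n$ (including powers of two), so the dyadic-block maximization in your opening paragraph is superfluous and could be dropped; it was a reasonable first attempt but is subsumed by the second argument. Second, the corollary asserts an exact worst-case value, whereas your argument establishes only the upper bound $\leq \lceil n/2\rceil$. A fully rigorous proof would also exhibit, for each $n$, a quote whose verification path has exactly $\lceil n/2\rceil$ hashes. Quoting every second token works for $n$ a power of two (and for even $n$ generally), but for odd non-power-of-two $n$ a different extremal quote is needed (e.g., for $n=9$ a quote of just two suitably chosen tokens already forces $5$ hashes). The paper omits this tightness direction too, so this is an inherited gap rather than one you introduced.
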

  
Another easy corollary to the proof of \cref{prop:size-of-signature-arbitrary}---and
\cref{lem:quote-when-n-is-not-pow2} in particular---we can bound the
error when $n$ is not a power of two (when $n$ \textit{is} a power of two, the
bound is, of course, exact).

\begin{corollary}
  When $n$ is not a power of two, the bound of
  \cref{prop:size-of-signature-arbitrary} overcounts by at most $t$
  hashes.
\end{corollary}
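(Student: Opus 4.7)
The plan is to extract the needed bound directly from the construction used inside the proof of \cref{lem:quote-when-n-is-not-pow2}, which is the only place where the argument for \cref{prop:size-of-signature-arbitrary} is not tight when $n$ fails to be a power of two. The key observation is that the overcount is entirely accounted for by the case distinction in that lemma's proof: when padding $m$ to $m'$ with $|m'| = 2^{\lceil \log n \rceil}$ and lifting $Q$ to $Q'$, each quoted token either contributes the same number of hashes to the signature for $Q'$ as to the signature for $Q$ (Case 1, when its leaf was already on the deepest level of the Merkle tree for $m$), or contributes exactly one additional hash (Case 2, when its leaf was strictly above the deepest level).

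First, I would recall that \cref{prop:size-of-signature-arbitrary} was proved by bounding the size of the signature for $Q$ by the size of the signature for $Q'$, and that $Q'$ lives in a message $m'$ whose length is a power of two, so the formula $t(\lceil \log n \rceil - \lceil \log t \rceil - 1) + 2^{\lceil \log t \rceil}$ is tight for $Q'$. Hence the overcount for the original pair $(m, Q)$ equals the number of additional hashes that appear in the signature for $Q'$ compared to the signature for $Q$.

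Next, I would invoke the case analysis from \cref{lem:quote-when-n-is-not-pow2}: the total difference between the two signature sizes is exactly the number of quoted tokens falling under Case 2, each of which contributes one extra hash. Since $Q$ contains $t$ tokens in total, the number of Case 2 tokens is at most $t$, yielding an overcount of at most $t$ hashes. This step is essentially a one-line accounting, not an obstacle.

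The only thing one needs to be careful about is that Cases 1 and 2 in the proof of \cref{lem:quote-when-n-is-not-pow2} are indeed exhaustive and that Case 2 is strictly bounded by one extra hash per token (not more), both of which were already established there. So the corollary follows directly without any new combinatorial work; it is really a statement about the structure of the reduction to the power-of-two case rather than a separate argument about Merkle trees.
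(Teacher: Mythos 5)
Your proposal takes essentially the same approach as the paper's own proof: both rely on the observation, already made inside the proof of \cref{lem:quote-when-n-is-not-pow2}, that padding $m$ to a power-of-two length $m'$ and lifting $Q$ to $Q'$ deepens each quoted token's leaf by at most one level, so each token contributes at most one extra hash, for a total of at most $t$. You invoke the explicit Case~1/Case~2 dichotomy from that lemma while the paper phrases the same fact as ``no levels are added, so at most one extra hash per quoted token,'' but the content is identical.

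One small imprecision in your write-up, worth tightening even though the paper's own two-sentence proof is similarly terse: you assert that the formula ``is tight for $Q'$'' and that ``the overcount for $(m,Q)$ \emph{equals} the number of additional hashes in the signature for $Q'$.'' That is only true if the particular lifted $Q'$ you obtained happens to be the worst-case quote for $m'$, which need not hold for an arbitrary $Q$. The clean way to close the gap is to run the map in the opposite direction: take a worst-case quote $Q'^{*}$ for $m'$ (whose signature size equals the formula), collapse each of its quoted leaves back to the corresponding leaf of $m$'s tree (itself, or its parent if the leaf was created by padding), and observe---by the same Case~1/Case~2 accounting---that this unlifted quote loses at most one hash per token, hence at most $t$. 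This exhibits a quote from $m$ whose signature size is at least the formula minus $t$, which is precisely what ``overcounts by at most $t$'' demands. Your case analysis supports this reverse direction just as well; you simply phrased the conclusion around the forward lift rather than the unlift.
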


\begin{proof}
  At each level of the Merkle tree, the signature needs to provide at most one
  hash for each quoted token. In the construction used in the proof of
  \cref{prop:size-of-signature-arbitrary} when $n$ is not a power
  of two, no levels are added to the Merkle tree, and hence the signature
  becomes no more than $t$ hashes larger.
\end{proof}

\begin{proposition}\label{prop:size-of-sig-contiguous}
  For a message of size $n > 2$ tokens and a contiguous quote of $t$
  tokens, the worst-case size of the verification path of the quote is
  \begin{math}
    2 \lceil \log{n} \rceil - 2
  \end{math}
  hashes.
\end{proposition}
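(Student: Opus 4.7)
The plan is to trace the verification path level by level and count the ``fully outside'' sibling hashes that must be supplied. The key observation that sharpens the $O(t \log n)$ bound of \cref{prop:size-of-signature-arbitrary} into one of size $O(\log n)$ is that, when the quote is contiguous, at every level $k$ the set of ancestors of quoted leaves forms a contiguous interval $[a_k, b_k]$ of nodes, simply because the ranges of leaf positions covered by the nodes at level $k$ are disjoint contiguous intervals tiling the whole leaf range.

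Let $h := \lceil \log n \rceil$, let the leftmost and rightmost quoted leaves lie at depths $h_l, h_r \le h$, and let their least common ancestor sit at depth $d$. Strictly above the LCA, at each depth $1, \ldots, d$ there is a single crossing ancestor whose sibling is fully outside the quote, contributing at most $d$ hashes in total. At depth $d+1$ (assuming $d < h$), the nodes $a_{d+1}$ and $b_{d+1}$ are the two children of the LCA and are siblings of each other, so nothing is contributed at this level. For each deeper level $k$, a short case analysis based on the contiguity of $[a_k, b_k]$ shows that the only nodes in $[a_k, b_k]$ whose siblings can lie outside the interval are the boundary nodes: the sibling of $a_k$ is needed exactly when $a_k$ is a right child (its sibling then lies strictly to the left of the quote, hence is fully outside), and symmetrically the sibling of $b_k$ is needed exactly when $b_k$ is a left child; every interior sibling pair lies wholly inside $[a_k, b_k]$ and is therefore computed, not supplied. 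Hence the left path contributes at most $h_l - d - 1$ hashes and the right path at most $h_r - d - 1$.

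Summing, the verification path has size at most $d + (h_l - d - 1) + (h_r - d - 1) = h_l + h_r - d - 2 \le 2h - d - 2$, which is maximised at $d = 0$ to yield $2\lceil \log n \rceil - 2$; the degenerate single-token quote ($l = r$ with $d = h_l$) contributes only $h_l \le h \le 2h - 2$ hashes whenever $n > 2$. For tightness when $n = 2^h$, I would exhibit the length-two contiguous quote formed by the rightmost leaf of the root's left subtree together with the leftmost leaf of the root's right subtree: there $d = 0$, both leaves live at depth $h$, and every level below the LCA contributes exactly two fully-outside siblings, saturating the bound. The main subtlety will be the per-level case analysis, in particular verifying that no interior node of $[a_k, b_k]$ ever supplies a hash, and then checking that the heap-shaped case $n \neq 2^h$ fits the same framework with $h_l$ or $h_r$ possibly equal to $h-1$, so that the upper bound $2\lceil \log n \rceil - 2$ remains valid without any additional padding argument.
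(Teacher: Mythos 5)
Your proof is correct and takes a genuinely different route from the paper's. The paper proves this by induction on the height of the Merkle tree: quotes confined to one subtree of the root are handled by the induction hypothesis plus one extra hash, while quotes spanning both subtrees are handled by a direct count (the Verifier-computable nodes form a contiguous run on each level, so at most two provided hashes per level, with the top two levels contributing none). You instead avoid induction entirely by decomposing around the LCA of the extremal quoted leaves: $d$ hashes strictly from the LCA up to the root's children, zero at depth $d+1$, and at most one hash from each of the two boundary paths below, yielding the sharper intermediate bound $h_l + h_r - d - 2 \le 2\lceil\log n\rceil - 2$. The core insight (ancestors of a contiguous quote form a contiguous interval at each level, so only the two boundary siblings can ever be missing) is shared, but your version makes the accounting fully explicit, naturally absorbs the paper's Case~1 (quote in one subtree corresponds to $d\ge 1$, which only improves the bound), and adds a tightness witness for $n=2^h$ that the paper omits. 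One caveat: when $h_l = h > h_r = h-1$ in a heap-shaped tree, the right endpoint $b_h$ at depth $h$ is not on the root--leaf path of the rightmost quoted token, so it is not literally covered by your ``right path contributes $h_r - d - 1$'' accounting; you flag this as a subtlety to check. It does work out (the last depth-$h$ leaf is a right child, so $b_h$'s sibling is inside the interval), and even if it contributed one extra hash the total would still be at most $2h - d - 2 \le 2h-2$, but a fully rigorous write-up should close this gap explicitly rather than defer it. Also note that your tightness example only shows the bound is achieved when $n$ is a power of two; for general $n$ (e.g.\ $n=5$, where the true maximum is $3 < 4$) the stated value is an upper bound rather than an exact worst case, matching what the paper actually proves despite its ``is'' phrasing.
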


\begin{proof}
  We prove this proposition by induction on the height of
  the Merkle tree.

  As the base case, we consider trees of height $2$. Either picking just one
  token or picking one token among the first two tokens and one token among the last
  one or two tokens, gives a verification path of worst-case size $2 \cdot 2 - 2 = 2$.

  Assume now that in a tree of height $k$, the largest possible size of the
  verification path for a contiguous quote is $2k - 2$. As our inductive step,
  we show that if the height of the Merkle tree of a message is $k+1$, then
  the largest possible size of the verification path for a contiguous quote
  from the message is $2(k+1) - 2$. For any contiguous quote $Q$, we consider
  two cases: (1) $Q$ is either contained in the first $2^k$ tokens or
  contains none of the first $2^k$ tokens, and (2) $Q$ contains both the
  $2^k$'th and the $(2^k+1)$'st token.
%  \begin{enumerate}
  %  \item

  \textbf{Case 1:}
    If $Q$ corresponds to leaves that are completely
    contained in one of the subtrees of the root, it follows from the
    induction hypothesis that the verification path consists of at most
    $2k - 2$ hashes from that subtree. The verification path
    contains only one additional hash, that of the root of the other subtree.
    Thus, the total number of hashes is at most $2k - 2 + 1 < 2(k+1) - 2$.
  % \item

  \textbf{Case 2:} We make a few observations.
    Considering a level of the Merkle tree from left to right,
    the nodes with hashes that the Verifier calculates are consecutive.
    In \cref{fig:size-of-sig-contiguous}, we have illustrated this by
    highlighting in green all the nodes with labels that the Verifier
    calculates.

    Additionally, observe that for any level of depth $j\ge 2$, the
    only nodes of depth $j-1$ with a label that the Verifier has to
    calculate and that, at the same time, (potentially) has a child
    outside the consecutive sequence of nodes that the Verifier
    calculated the labels for at depth $j$, are the parents of the
    leftmost and rightmost nodes in that consecutive sequence at
    depth~$j$. All the nodes that might be characterized like this
    are on the two paths of black arrows in \cref{fig:size-of-sig-contiguous}.
    Hence, it follows that on each level, the verification
    path needs to provide at most $2$ hashes. Clearly, the root's
    label will not need to be provided by the verification path, and
    the root's children will also not need to have their labels
    provided since the quote contains a token from each child's
    subtree. Finally, observing that there are a total of $k+2$ levels
    in a tree of height $k+1$, allows us to conclude that the
    verification path needs to provide at most
%    \begin{align}
      $2 \cdot (k + 2 - 2) = 2 \cdot (k+1) - 2$
%    \end{align}
    hashes, completing the case and the proof.
  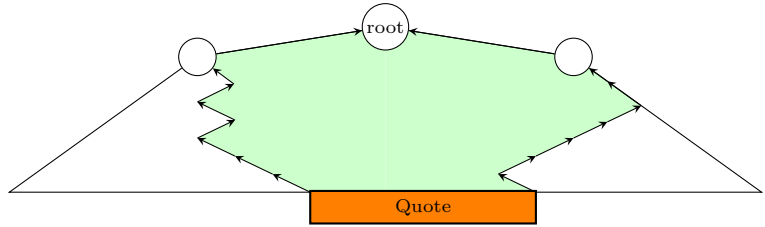
\begin{figure}[!t]
    \centering
%    \resizebox{\columnwidth}{!}{%
      \begin{tikzpicture}[<-,>=stealth,xscale=.5,yscale=.2]
        \path[fill=green!20] (4,-10) -- (3,-8.8) -- (4,-7.6) -- (5,-6.4) -- (5.9,-5.325) --
        (6.8,-4.25) -- (5,-1) -- (0,1) -- (0,-10) -- (4,-10);
        \path[-,fill=green!20] (-2,-10) -- (-3,-8.8) -- (-4,-7.6) -- (-5,-6.4) -- (-4,-5.2) -- (-5,-4)
        -- (-4.03,-2.8) -- (-5,-1) -- (0,1) -- (0,-10) -- (-2,-10);
        \node[treenode,fill=white,text width=.6cm,thin] (root) at (0,1) {\scriptsize root};
        \node[treenode,fill=white,text width=.5cm,thin] (L) at (-5,-1) { };
        \node[treenode,fill=white,text width=.5cm,thin] (R) at (5,-1) { };
        \draw[-] (L) -- (-10,-10) -- (10,-10) -- (R) -- (root) -- (L);
        \draw (root) -- (L);
        \draw (root) -- (R);
        \node[leaf,fill=orange,thick, minimum height=.3cm,minimum width=3cm]
        (Q) at (1,-11) {\scriptsize Quote};
        \draw[->] (-2,-10) -- (-3,-8.8);
        \draw[->] (-3,-8.8) -- (-4,-7.6);
        \draw[->] (-4,-7.6) -- (-5,-6.4);
        \draw[->] (-5,-6.4) -- (-4,-5.2);
        \draw[->] (-4,-5.2) -- (-5,-4);
        \draw[->] (-5,-4) -- (-4.03,-2.8);
        \draw[->] (-4.03,-2.8) -- (L);
        \draw[->] (4,-10) -- (3,-8.8);
        \draw[->] (3,-8.8) -- (4,-7.6);
        \draw[->] (4,-7.6) -- (5,-6.4);
        \draw[->] (5,-6.4) -- (5.9,-5.325);
        \draw[->] (5.9,-5.325) -- (6.8,-4.25);
        \draw[->] (6.8,-4.25) -- (5.9,-2.625);
        \draw[->] (5.9,-2.625) -- (R);
      \end{tikzpicture}
%    }
    \caption{Merkle tree with a contiguous quote divided between the left- and
      right subtree. The labels of all the nodes in the green area are
      calculated from the labels of their children and do not need to be
      part of the signature.}
    \label{fig:size-of-sig-contiguous}
  \end{figure}
\end{proof}

\subsubsection{The Verifier}
The Verifier has to verify one classical digital signature and to reconstruct
the Merkle tree using the quote together with the verification path. Once
again, the cost of this depends on where in the message the quote is located,
with the number of hashes generally going towards $2n-1$ as the quote gets
closer to being the full message. For example, if all but one token has been
quoted, the Verifier needs to compute $2n-2$ hashes, and if only one token has
been quoted, the Verifier only needs to compute $\lceil \log{n} \rceil + 1$ hashes.

%%% Local Variables:
%%% mode: latex
%%% TeX-master: "../main"
%%% End:

% \input{./tex/instantiation.tex}
\ShortLong{}{\section{Design}\label{sec:impl}
%\simon{Consider changing to more algorithmic style, and then
%  just a short description.}
This section considers some more practical aspects of quotable signatures from
Merkle trees. In \cref{ssec:design-of-ss}, we give an overview of
how the parties' algorithms work. Then, in \cref{ssec:choices}, we discuss
some application-specific choices one would have to make when implementing
or using quotable signatures.
% the prototype.
% and finally, in \cref{ssec:optimizing-on-ss}, we consider some use case specific
% optimizations that could be made.

\subsection{Algorithms}\label{ssec:design-of-ss}

Before going into the details of the algorithms,
we consider the intuitive approach one would take
on the example shown in \cref{fig:Merkle-tree-jumps}:
\begin{itemize}
\item First,
  we need an algorithm
  (described in detail in \cref{sssec:gen-m-alg})
  for generating the Merkle tree
  shown in \cref{fig:Merkle-tree-jumps}.
  It takes the sequence of
  words (the tokens) as input
  and outputs the label of the root node $u_{0,0}$.
  The tokens are given to the algorithm by the
  signing or quoting algorithms.
  The tree computation is somewhat trivial but important
  since different algorithms might result in different tree shapes,
  but the same shape is required for signing, quoting, and verifying.
  In our case, we require the tree to be heap-shaped.
\item The signing algorithm (described in \cref{sssec:signer-alg})
  extracts the sequence of tokens from the message,
  applies the algorithm for generating a Merkle tree,
  and signs the label of the root node.
\item The quoting algorithm (described in \cref{sssec:quote-alg})
  computes the nodes on the verification path of a quote
  along with their labels.
  Referring back to \cref{fig:Merkle-tree-jumps}, this starts with
  identifying the nodes highlighted in blue, if the quote was the
  subsequence \textquote{The quick}, and in red,
  if the quote was \textquote{jumps}. This algorithm also generates the
  Merkle tree and extracts the information needed to verify the quote
  such as the location of the quote in the original message
  as well as its length.
\item The verifying algorithm (described in \cref{sssec:verif-q}) is given the
  quote and the quoted signature, which consists of the location of its tokens
  within the original message, the length of the original message (number of
  tokens), the labels of the nodes on the corresponding verification path, and
  the digital signature for the label of the root hash.
  From this it calculates the label of the root node in the Merkle tree
  corresponding to the full sequence and verifies that the given signature is
  indeed a valid signature for this value.
  If the quote was \textquote{The quick}, this corresponds to calculating the
  labels of $u_{3,0},u_{3,1},u_{2,0},u_{1,0}$, and finally $u_{0,0}$, which
  would then be verified.
\end{itemize}
The following sections describe these algorithms in detail,
beginning with the computation of the Merkle tree,
since this operation is required
for both signing and quoting.

\subsubsection{Generating Merkle Trees}\label{sssec:gen-m-alg}

To generate a Merkle tree for a sequence $S$ of tokens, we define 
\texttt{CreateMerkleTree($S$)}
as a recursive function. We use $\ell$ to indicate
the number of tokens in $S$.
\begin{itemize}
\item If $S$ consists of just one token, then create a new node with the token as its
  label. Let this be the sole child of a new node $u$, with the hash of the token
  as its label. Return $u$.
\item Otherwise, create a new node $u$.
  \begin{itemize}
  \item If $\ell$ is a power of two, then let $u$'s left child be the node returned by
    recursively calling \texttt{CreateMerkleTree()} on the first $\ell/2$ tokens
    of $S$, and let $u$'s right child to be the node returned by recursively calling
    the \texttt{CreateMerkleTree()} on the last $\ell/2$ tokens of $S$.
  \item If $\ell$ is closer to $2^{\lceil \log_2\ell \rceil}$ than
    $2^{\lfloor \log_2\ell \rfloor}$, the tree rooted at $u$'s left child will be
    full and contain $2^{\lfloor \log_2\ell \rfloor}$ tokens. Hence, let $u$'s left
    child be the node returned by recursively calling
    \texttt{CreateMerkleTree()} on the first $2^{\lfloor \log_2\ell \rfloor}$ tokens
    of $S$, and $u$'s right child be the node returned by recursively calling
    \texttt{CreateMerkleTree()} on the remaining $\ell - 2^{\lfloor \log_2\ell \rfloor}$ tokens of $S$.
  \item If $\ell$ is closer to $2^{\lfloor \log_2\ell \rfloor}$ than
    $2^{\lceil \log_2\ell \rceil}$, the tree rooted at $u$'s right child will be
    complete, and contain $2^{\lfloor \log_2\ell \rfloor - 1}$ tokens. Hence, let
    $u$'s right child be the node returned by recursively calling
    % \texttt{CreateMerkleTree()}
    this function on the last $2^{\lfloor \log_2\ell \rfloor-1}$ tokens
    of $S$, and $u$'s left child be the node returned by recursively calling
    this function
    % \texttt{CreateMerkleTree()}
    on the remaining
    $\ell - 2^{\lfloor \log_2\ell \rfloor-1}$ tokens of $S$.
  \end{itemize}
  % \item Let $u$ be the parent of two children.
\item Set $u$'s label to be the hash of the concatenation of the labels of $u$'s
  children, i.e., $u$.label = hash($u.\text{left.label} \concat
  u.\text{right.label}$).\footnote{Note that we are omitting that we have to
    mask the values of tokens and the labels of internal nodes in different ways
    before hashing them, in order to avoid a trivial collision attack, as
    discussed in \cref{ssec:mt-qs-security}. }
\item Return $u$.
\end{itemize}
This function returns the root of the Merkle tree corresponding to~$S$.
The Signer signs the label of the root
to create the digital signature for the message corresponding to $S$.

% \begin{algorithm}
%   \begin{algorithmic}
%     \Function{CreateMerkleTree}{$S$}
%     \State{n = new node()}
%     \State{$\ell$ = length($S$)}
%     \If{length($S$) == 1}
%     \State{n.label = hash($00\concat S[1]$)}
%     \Else
%     \If{$\lfloor \log{\ell} \rfloor - \lceil \log{\ell}\rceil == 0 $}\Comment{n is a
%     power of two}
%     \State{n.left = {CreateMerkleTree}($S[1:\ell/2]$)}
%     \State{n.right = {CreateMerkleTree}($S[\ell/2 +
%     1:\ell]$)}
%     \ElsIf{$\ell > 2^{\lfloor \ell \rfloor} + 2^{\lfloor \ell \rfloor - 1}$} \Comment{imbalance is on right side}
%     \State{n.left = {CreateMerkleTree}($S[1:2^{\lfloor \ell \rfloor}]$)}
%     \State{n.right = {CreateMerkleTree}($S[2^{\lfloor \ell \rfloor}+1:\ell]$)}
%     \Else \Comment{imbalance is on left side}
%     \State{n.left = {CreateMerkleTree}($S[1:\ell - 2^{\lfloor \ell \rfloor - 1}]$)}
%     \State{n.right = {CreateMerkleTree}($S[\ell - 2^{\lfloor \ell \rfloor -1 }+1:\ell]$)}
%     \EndIf
%     \State{n.label = hash($01\concat \text{n.left.label} \concat
%     \text{n.right.label}$)}
%     \EndIf
%     \State{\textbf{return} n}
%     \EndFunction
%   \end{algorithmic}
% \end{algorithm}

\subsubsection{Signing a Message}\label{sssec:signer-alg}
Using \texttt{CreateMerkleTree()}, signing a message is straightforward.
\begin{itemize}
\item Turn the message $m$ into a token sequence $S$.
  The Quoters and Verifiers need to be able
  to obtain the same tokens for a given message or quote.
  How this can be achieved 
  depends on the specific application and use-case;
  see~\cref{ssec:choices} for a brief discussion. %, and the
  %specific infrastructure that the news media company is using.
\item Generate the Merkle tree for $S$
  using \texttt{CreateMerkleTree()}.
  % , i.e., call \texttt{CreateMerkleTree($S$)}.
  Denote the label of the root of the Merkle
  tree by $u$.
\item Sign $u$ using a classical signature algorithm
  to obtain the quotable signature for the message $m$.
\end{itemize}

\subsubsection{Quoting a Message}\label{sssec:quote-alg}
To obtain a quote $Q$ for a subsequence of the token sequence $S$,
the Quoter does the following:
\begin{itemize}
\item Extract the token sequence from the message and generate the
  Merkle tree using \texttt{CreateMerkleTree()}. 
\item Add a flag to each internal node in the created Merkle tree
  that indicates if the label of the node needs to be provided in the
  signature for the quote. Initially, set each flag to~\texttt{delete}, indicating that
  they are not needed.
\item For each token \emph{in the quote}, process each node on its root-token
  path as described below (start at the node corresponding to the token and,
  after processing that node, continue to its parent, stopping after finishing
  with a child of the root). Note that when processing a later token, nodes
  on its root-token path may no longer have their flag set to \texttt{delete} if they have already
  been processed on another root-token path.
  \begin{itemize}
  \item If the node's flag is \texttt{delete},
    set its sibling's flag to \texttt{required}, indicating that its label is needed (unless
    this flag is later changed to \texttt{implicit}). Note that this node and its sibling could
    both correspond to tokens in the quote, in which case, when the sibling
    is processed, both nodes will have their flags set to \texttt{implicit}.
  \item If the node's flag is \texttt{required},
    % SE: This was already said in previous \item
    % (indicating that its sibling is on the root-token path corresponding to
    % some already processed token in the quote),
    set the flags of the node and its sibling to \texttt{implicit}, indicating
    that their labels can be calculated from information that is already
    included.
    Then move on to the next token; the rest of the verification path for this
    token has already been considered, as part of the verification path for a
    previously processed token.
  \end{itemize}
\item Extract the hashes that the signature needs to provide by
  performing an inorder traversal of the Merkle tree, adding the label of any
  node with its flag set to \texttt{required} to a list of provided hashes.
\item Create the signature for the quote
  as the signature for the root of the Merkle tree, the list of
  hashes generated in the previous step, the number of tokens in the
  original message, and the indices of the quoted tokens.
\end{itemize}

\begin{remark}
%We make two remarks on the generation of the signature for a quote.
%First, we
%note that we have not gone into detail with how the signature for the quote
%should be shared. For the most seamless experience, this should probably be done
%directly through the web site that the quote is shared on, but in
%\cref{sec:QSandFN}, we show an alternative approach. Secondly, we
Note that we have made the assumption that the Quoter is quoting directly from
a message and not quoting from a quote. However, one can straightforwardly
combine the latter parts of this algorithm with parts of the algorithm described
in \cref{sssec:verif-q} to obtain this functionality.
\end{remark}

\subsubsection{Verifying a Quote}\label{sssec:verif-q}
Given a quote and a signature for the quote, consisting of the signature for the
root of the Merkle tree, a list of required hashes, the length of the original
sequence, and the indices of the quoted tokens, the Verifier can verify the quote as
follows:
\begin{itemize}
\item Create a heap-shaped tree with as many leaves as there were tokens
  in the original sequence. Let all the nodes be unlabeled.
\item Add a flag to each node in the tree, initially setting each flag to
  \texttt{delete}.
\item For each token \emph{in the quote}, work upwards on the root-token path
  corresponding to the token. For each node (except the root), do the following:
  \begin{itemize}
  \item If the node's flag is \texttt{delete}, set its sibling's flag to \texttt{required}.
  \item If the node's flag is \texttt{required}, set the flags of the node and
    its sibling to \texttt{implicit} and move on to the next token.
  \end{itemize}
\item Perform an inorder traversal of the tree. When encountering a node with
  its flag set to \texttt{required}, label it with the next hash in the list of required hashes. 
\item For each of the leaves corresponding to tokens in the quote, label them with the
  hash of that token.
\item The remaining labels, including the root's, can now be calculated using a
  straightforward recursive function: Starting from the root, calculate its
  label from the labels of its children, calling recursively on any unlabeled
  children.
\item Verify the calculated root hash, with respect to the signature for the
  root hash, included in the signature for the quote. If this verification is
  successful, the quote has been verified.
\end{itemize}

This only covers verifying the authenticity of the quote, and additional
information could be made clearly available.
% to the user verifying the quote.
This information could, for example, include if the quote is contiguous or where
tokens from the original message are missing, where they were located in the
message, and application-specific information.

\subsection{Application-Specific Choices}\label{ssec:choices}
% \begin{itemize}
% \item
When instantiating quotable signatures for a concrete use-case or application,
one of the choices to make is what to use as tokens.
In our examples, we have used words as tokens, which could be a natural choice
for some applications, but there are many other ways to tokenize a message.
This is considered further in \cref{sec:QSandFN}.
% (or languages).However, for other
% applications, it might be more natural to require a Quoter to use
% sentences or entire paragraphs.Additionally, in cases where one is signing
% non-text messages, one would have to find and consider other choices for what to
% use as tokens.
% \todo{RN: Mention that the tokenization rule can either be statically defined
% and known in advance by all parties -- or dynamically by the Signer.}
% \kim{We already say something along those line in prototype.tex -- search for ``default''.}
% \todo{RN: Mention here how to give control to the signer about what is a token?}
% \kim{Also in prototype.tex, we suggest regular expressions.}

% \item
For the Signer's algorithm in \cref{sssec:signer-alg}, there is a choice
to be made as to what classical digital signature scheme is used to sign the
root's label. Here, suggestions could be to follow either one of schemes from the Digital Signature
Standard (DSS)~\cite{NIST-DSS}, or, in the interest of long-term security,
%% RN: 'forward security' is something else
a post-quantum signature scheme such as~\cite{DBLP:journals/tches/DucasKLLSSS18,falcon,sphincs+v3.1}.
% \end{itemize}

% \subsection{Optimizations}\label{ssec:optimizing-on-ss}
% We consider now some of the optimizations that could be made to the prototype.
% \begin{itemize}
% \item
A natural optimization would be to change the Merkle trees to use tokens as
leaves instead of hashes of tokens. This would reduce the number of
hash calculations needed to construct a Merkle tree by about a half.
One can in some situations take this slightly further.
If the combined size of the tokens of two leaf children of a node is no
longer than a hash value, then we could use the concatenation of the two
tokens instead of a hash value for their parent.
Naturally, this could be continued recursively.
}
\section{Quotable Signatures and Fake News}\label{sec:QSandFN}
In the introduction, we argued that the current approach to mitigating the
effects of fake news, focusing on flagging problematic content, is not
sufficient. As mentioned, one supplementary approach could be to
bolster authentic content by authenticating the source of quotes, for example
on social media, and the literature gives reason to believe this could have
an impact. This approach could be implemented using a quotable
signature scheme. Here, the message that is the original source of a
quote would be an article and the creator or distributor of the article
(a news agency, for instance) would act as the Signer, the one sharing the quote as
the Quoter, and the one verifying the quote as the Verifier.
For this approach to be
effective, it would need to be widely adopted, both by news media and by users
sharing and reading quotes from articles. We make the following observations on
these problems.

Regarding the news media, there is wide interest in supporting initiatives to
combat fake news, see for example~\cite{C2PA}. Additionally, from our
discussions with a news media company,\footnote{Specifically, we talked with the
  editor in charge of the platforms and the editor in charge of the digital
  editorial office at a large
  % one of the largest Danish
  media company that produces multiple newspapers for different regional areas,
  in both paper and digital versions.} it is apparent that the current workflow
employed by modern media companies is already highly automated, and it appears
that it should be quite simple to integrate a process by which, when an article
is published (or updated), it is automatically signed with the media company's
public key.
% In general, it should be conceptually simple to
% implement a system to automatically take on the role of a Signer for parties
% that host their own content. For parties that have their content hosted by a
% third party, more sophisticated solutions might be needed.
Regarding user adoption,
there is the challenge of getting a sufficiently large proportion of users using
the tool, but one would also have to teach users what a quote being
authenticated means, i.e., that the source and integrity of the quote has been
assessed, but not its truthfulness or the quality of its source, for example.

If news media and social media integrate this approach into their websites, our
\ShortLong{methods}{algorithms} can be employed without any explicit user awareness.
%One way user adoption could be eased with sufficient support from news media and
%social media would be to have everything integrated as part of the websites,
%and thus not require any active user action.
With such an integration, when a user copies a quote from
a signed article, a signature for the quote is automatically generated, and an
element including both quote and text is put into the clipboard, together with
the plain text quote (in practise, this would be a \texttt{text/html}
element and a \texttt{text/plain} element). When the user then pastes the quote,
a website supporting signatures will use the clipboard element with a signature~\cite{W3C_ClipboardAPI}.
%\simon{I believe this is the right source to cite here, but I am not sure}
One challenge with this approach is that the
verification is now performed by the websites, rather than a browser extension,
for example. Thus, the user has to trust the website to perform the
authentication correctly. %, or perform a check of each and every website.

An essential choice is how to divide text into
tokens, since any subsequence of the tokens is an allowable quote.
Natural choices could be by word, sentence, or paragraph. As a more involved
choice, one could also define the tokens at a per-token basis, and simply mark
the tokens in the HTML code. A variation of this would be to have a default
setting, but to allow the Signer to decide how to split the article into tokens
when signing.
% This choice would have to be included in the signature for the article (and
% quotes from it), but have the advantage of allowing the Signer greater control
% over how the signed text can be quoted.
As a variant, one could also consider using content extraction policies, as
in~\cite{DBLP:conf/icisc/SteinfeldBZ01}, so the Signer can specify which
subsequences of tokens are allowable quotes. A media company might want to disallow quotes of noncontiguous segments, for example, or disallow including only
parts of a sentence containing a negative, such as \textquote{not},
\textquote{neither}, or \textquote{never}. Such restrictions could be handled
efficiently using regular expressions.

% Towards this end, 
We are implementing a prototype,\footnote{To be made available at \url{https://serfurth.dk/research/archive/}} separated into two
parts: a library that can be used by media companies to sign their articles and
a browser extension that allows users to quote with signatures and to verify
signatures for quotes. The library contains
implementations of the \ShortLong{proposed methods}{relevant algorithms from \cref{ssec:design-of-ss}} that each media
companies can integrate into their publishing workflow.
% We touch on the feasibility of this in \cref{ssec:in-current-world}.
The browser extension modifies websites
such that text (both full articles and quotes) with verified signatures is
shown to be signed, and allows the user to make quotes from the signed text that
include a signature for the quote. The browser extension also allows
the user to get more information from the signature for a quote, e.g., who signed it,
when it was signed, an indication of where text was removed, and a link to the original article.

One could further extend the system with different labels, depending on the quality of the source of a quote.
For example,
% in Denmark there is the Danish Press Council (\textquote{Pressen\ae{}vnet})
many countries have press councils enforcing press ethics,
which includes providing correct information, e.g., by researching sufficiently
and publishing errata when needed.
% \footnote{For
% more details, we refer to the Danish Press Council's website
% \url{https://www.pressenaevnet.dk/press-ethical-rules/}.}
Hence, it may make sense to mark quotes from articles written by news media
certified as following press ethics and rulings of a national press council.
One could even go so far as to authenticate only signatures signed by such
sources.

 To make a difference in the future, media companies and users on social media need
 to adopt these quotable signatures.
 % An additional limitation is that for our approach to be effective, it would have
 % to be broadly adopted amongst both news media companies and users on social
 % media.
 % The first is an inherent requirement;
 % The news media has to sign their
 % content in order for the signatures to be quotable.
 % The users of social media could be helped by
 % having, for instance, people's operating systems, web browsers, and apps
 % automatically include signatures when
 % copy/pasting quotes from news articles.
 % Eventually,
 To have the best effect,
 social media platforms should directly support quotable signatures
 and the required extension should be natively integrated into browsers.

\section{Future Work}
\label{sec:futureWork}

With this paper, we have extended the theory on quotable signatures and
presented an application of quotable signatures as a supplementary approach to
mitigating the effect of fake news. %Our
%addition to the theory of quotable signatures includes an argument for why
%quotable signatures are secure and bounds on the size of signatures for quotes,
%both arbitrary and contiguous.
% Additionally, we have argued that this makes
% quotable signatures efficient enough to be usable in practice, considered how
% quotable signatures for news could fit into the current infrastructure, and
% presented a concrete prototype for this specific case.

Further work on quotable signatures could include using methods
similar to the ones employed in \cite{DBLP:conf/pkc/HulsingR016}
and~\cite{sphincs+v3.1} to remove the requirement that the used hash function be
collision-resistant, and thereby remedy a vulnerability against multi-target
attacks against hash functions. Additionally, variants of quotable signatures
optimized for different types of media should be developed and
compared. Our current variant is in some sense optimized for cases where one
will often wish to quote something contiguous in one dimension, such as
text. If, instead, the goal is to crop an image, one would end up with
a \textquote{quote} that is contiguous in two dimensions. We have not yet explored how
to handle this case effectively.
Finally, as discussed in \cref{sec:QSandFN}, different policies for dividing text
into tokens could be studied.

A natural next step towards using quotable signatures to combat misinformation
would be to verify the effectiveness of 
the proposed method experimentally. In particular, the effects of using quotable
signatures for verifying news shared on social media and elsewhere need to be
investigated. A suggestion for a first study could be to investigate if the use
of quotable signatures improves participants' ability to recall from which news
brand a story originated, which was an issue identified
in~\cite{Newsbrandattribution18}. Additional studies along the lines of
\cite{doi:10.1126/sciadv.abl3844}, investigating the effects on the quality of
the news diet of participants, would also be of interest.

%%% Local Variables:
%%% mode: latex
%%% TeX-master: "../main"
%%% End:

\hypersetup{ citecolor=blue, linkcolor=blue, urlcolor=blue}
\printbibliography

\appendix
% \input{./tex/prototype-ui.tex}
% \input{./tex/appendix.tex}\
% \end{spacing}
\end{document}